\newcommand{\nsw}{noisy-storage model}
\newcommand{\scp}{strong-converse property}
\newcommand*{\cB}{\mathcal{B}}
\newcommand*{\cF}{\mathcal{F}}
\newcommand*{\cH}{\mathcal{H}}
\newcommand*{\cN}{\mathcal{N}}
\newcommand*{\sbin}{\{0,1\}}
\newcommand{\guess}{p_\mathrm{guess}}
\newcommand{\ket}[1]{\left\vert{#1}\right\rangle}
\newcommand{\cancel}[1]{}
\def\01{\{0,1\}}
\newcommand{\eps}{\varepsilon}
\newcommand{\sketbra}[2]{{\ensuremath{\lvert #1\rangle\!\langle #2\rvert}}}
\newcommand{\lketbra}[2]{{\ensuremath{\left\lvert #1\right\rangle\!\!\left\langle #2\right\rvert}}}
\newcommand{\ketbra}[2]{\if@display\lketbra{#1}{#2}\else\sketbra{#1}{#2}\fi}
\newcommand{\proj}[1]{\ketbra{#1}{#1}}
\newcommand\beq{\begin{equation}}
\newcommand\eeq{\end{equation}}
\newcommand\bea{\begin{eqnarray}}
\newcommand\eea{\end{eqnarray}}
\newcommand{\Tr}{\mbox{\rm Tr}}
\newtheorem{theorem}{Theorem}[section]
\newtheorem{lemma}[theorem]{Lemma}
\newtheorem{corollary}[theorem]{Corollary}
\newtheorem{protocol}{Protocol}
\newtheorem{definition}[theorem]{Definition}
\newenvironment{proof}
{\noindent {\bf Proof. }}
{{\hfill $\Box$}\\
 \smallskip}
\newcommand{\mY}{\mathcal{Y}}
\newcommand{\mX}{\mathcal{X}}
\newcommand{\mN}{\mathcal{N}}
\newcommand{\setI}{\mathcal{I}}
\newcommand{\setX}{\mathcal{X}}
\newcommand{\setF}{\mathcal{F}}
\newcommand{\set}[1]{\{#1\}}
\newcommand{\Set}[2]{\{ #1 : #2\}}
\newcommand{\regE}{E}
\newcommand{\assign}{\ensuremath{\kern.5ex\raisebox{.1ex}{\mbox{\rm:}}\kern -.3em =}}
\newcommand{\ol}[1]{\overline{#1}}
\renewcommand{\H}{\operatorname{H}} 
\newcommand{\hmin}{\ensuremath{\H_{\mathrm{min}}}}
\newcommand{\hmine}[2]{\ensuremath{\hmin^{#1}\left(#2\right)}}
\newcommand{\hminee}[1]{\hmine{\varepsilon}{#1}}
\newcommand{\syn}{\mathit{syn}} 
\newcommand{\dB}{{\sf B}'}
\newcommand{\ev}{\mathcal{E}}
\newcommand{\sent}{{\rm sent}}
\newcommand{\lostB}{{\rm B,no\ click}}
\newcommand{\errB}{{\rm B, err}}
\newcommand{\code}{\mathfrak{c}}
\newcommand{\QID}{{\sf \textit{Q-ID}}\xspace}
\newcommand{\regB}{\ensuremath{E}_{\dB}}
\begin{document}

\title{Simple Protocols for Oblivious Transfer and Secure
  Identification\\in the Noisy-Quantum-Storage Model}

\author{\vspace{-1.6cm}}

\author{Christian Schaffner$^1$\\
  \textit{$^1$ Centrum Wiskunde \& Informatica (CWI) Amsterdam, The Netherlands}} \date{\today}
\maketitle

\begin{abstract}
  We present simple protocols for oblivious transfer and
  password-based identification which are secure against general
  attacks in the noisy-quantum-storage model as defined
  in~\cite{KWW09arxiv}. We argue that a technical tool
  from~\cite{KWW09arxiv} suffices to prove security of the known
  protocols. Whereas the more involved protocol for oblivious transfer
  from~\cite{KWW09arxiv} requires less noise in storage to achieve
  security, our ``canonical'' protocols have the advantage of being
  simpler to implement and the security error is easier
  control. Therefore, our protocols yield higher OT-rates for many
  realistic noise parameters.

  Furthermore, the first proof of security of a direct protocol for
  password-based identification against general noisy-quantum-storage
  attacks is given.
\end{abstract}
\vspace{1mm}
\pagestyle{plain}

\section{Introduction}
Throughout history, a main goal of cryptography has been to provide
secure communication over insecure channels. In today's
internet-driven society however, more advanced tasks arise: people need
to do business and interact with peers they neither know nor trust. A
simple example is \emph{secure identification}: Users Alice and Bob
share a password $P$ and when setting up a communication, Alice wants
to make sure she is really interacting with Bob---the only other person
who knows $P$. Simply announcing $P$ is insecure, as any eavesdropper
can intercept $P$ and use it later to impersonate Bob. We need a
method to check whether two parties are in possession of the same
password, but without revealing any additional information.

Secure identification is a special case of the more general problem of
\emph{secure two-party computation}: Alice and Bob want to perform a
computation on private inputs in a way that they obtain the correct
result but no additional information about their inputs is
revealed. An interesting example are sealed-bit auctions where the
winner should be determined without opening the losing bids. Closer
to everyday life, almost any interaction with an Automated Teller
Machine (ATM) can be seen as an instance of secure two-party computation.

The techniques used in modern classical cryptography to secure
communication and provide secure two-party computation are based on
unproven mathematical assumptions such as the hardness of finding the
prime factors of large integer numbers (for example in the widely used
RSA scheme~\cite{RSA78}). We do not know any practical schemes which
are provably infeasible to break and it is unlikely that the currently
known mathematical techniques allow for such a scheme. In contrast,
quantum cryptography, which is based on transmitting information
stored in the state of single elementary particles, offers schemes
with \emph{provable security}.

\medskip The most prominent example is Quantum Key Distribution (QKD)
which allows two honest parties to securely communicate. In 1984,
Bennett and Brassard proposed a QKD protocol~\cite{BB84} which was
proven unconditionally secure~\cite{Mayers95, Yao95, SP00}. In other
words, security does not rely on any unproven assumptions but holds
against any eavesdropper Eve with unbounded (quantum) computing
power. Such provably secure key-distribution schemes cannot be
achieved by any classical means (without additional assumptions).
It is important to realize that the technical
requirements for honest parties to perform QKD protocols are well
within reach of current technology.  As of today, the technology has
even reached commercial level: At least three different companies are
selling hardware for QKD~\cite{SmartQuantum, idQuantique, MagicQ}.

After the discovery of QKD, researchers thought it was possible to use
quantum communication to implement more advanced cryptographic
primitives such as secure two-party computation. However, it was shown
in the late 90s that essentially \emph{no} cryptographic two-party
primitives can be realized if only a quantum channel is available and
no further restriction on the adversary is
assumed~\cite{Mayers97,LC97,Lo97}. In other words, secure two-party
computation is more difficult to achieve than key distribution. This
is not completely surprising given the generality of secure two-party
computation. Nevertheless, quantum cryptography might still help to
achieve significantly better schemes than purely classical
constructions.

\medskip Indeed, in joint work with Damg{\aa}rd, Fehr and Salvail, we
proposed in 2005 a new realistic assumption for quantum protocols
under which provably secure two-party computation becomes
possible~\cite{DFSS05}. The basic idea is to exploit the technical
difficulty of storing quantum information. In this
\emph{bounded-quantum-storage model}, security holds based on the sole
assumption that the parties' \emph{quantum memory during the execution
  of the protocol} is upper bounded. No further restrictions on the
(quantum) computing power nor the classical memory size are
assumed. Storing quantum information requires to keep the state of
very small physical systems such as single atoms or photons under
stable conditions over a long time. Building a reliable quantum memory
is a major research goal in experimental quantum
physics~\cite{JSCFP04,CMJLKK05,EAMFZL05,CDLK08,AFKLL08}. Despite these
efforts, current technology only allows storage times of at most a few
milliseconds.

Even though breaking the security of our protocols requires a large
quantum memory with long storage times, neither quantum memory nor the
ability to perform quantum computations are needed to actually run the
protocols; the technological requirements for honest parties are
comparable to QKD and hence well within reach of current
technology. Therefore, cryptographic schemes based on storage
imperfections provide potentially very useful solutions for secure
two-party computation with the advantage of much stronger security
guarantees compared to classical technology.

\subsection{Bounded- versus Noisy-Quantum-Storage Model}
In the bounded-quantum-storage model, we assume that a dishonest
receiver can perfectly store the incoming photons and perform perfect
quantum operations under the sole restriction that at a certain point
of the protocol, the size of his quantum memory is limited to a
constant fraction of the total number of received photons. Bounding
the size of the adversary's quantum storage in this way is a handy
assumption to work with in security proofs. In a series of works over
the last
years~\cite{DFSS05,DFRSS07,DFSS07,Schaffner07,DFSS08,DFSS10journal},
it has been shown that any type of secure two-party computation is
possible in the bounded-quantum-storage model.

On the other hand, simply limiting the adversary's quantum memory size
does not capture correctly the difficulty one currently faces when
trying to store photons. A better formalization of this difficulty is
to assume that the dishonest receiver uses the best available (but
still imperfect) photon-storage device. The imperfection of the
storage-device is modeled as noisy quantum channel where the noise
level of the channel increases with the amount of time during which
the quantum information needs to be stored. With current technology,
the noise reaches maximum level (i.e.~the quantum information is
completely lost) if a storage time in the order of milliseconds is
required~\cite{JSCFP04}.

First results in this \emph{noisy-quantum-storage model} have been
established in joint work with Terhal and
Wehner~\cite{WST08,STW09}. Assuming ``individual-storage
attacks''---where the adversary treats all incoming qubits in the same
way---the security of oblivious transfer and password-based
identification was established using the original protocols from the
bounded-quantum-storage model~\cite{DFRSS07,DFSS10journal}.

The most general storage attacks were first mentioned
in~\cite{Schaffner07}, but addressed only recently by K\"onig, Wehner
and Wullschleger \cite{KWW09arxiv}. In this most general model, the
adversary can for example try to use a quantum error-correcting code
in order to protect himself from storage errors. Concretely, he is
allowed to first perform an arbitrary perfect ``encoding attack'' on
the incoming quantum state, then he uses his (noisy) quantum-storage
device together with unlimited classical memory and finally, he can
again perform perfect quantum computations.\footnote{A detailed
  description of the model of~\cite{KWW09arxiv} will be given in
  Section~\ref{sec:noisystorage}, see also
  Figure~\ref{fig:noisystorage}.} The authors of~\cite{KWW09arxiv}
show how the security of protocols in this general model can be
related to the maximal rate of classical information that can be
transmitted over the noisy storage channel.

In more detail, \cite{KWW09arxiv} introduces the conceptual novelty of
splitting the security analysis of protocols for oblivious transfer
and bit commitment in two phases. In the first phase, the players use
the well-known BB84 quantum coding scheme to achieve a (quantum)
primitive which the authors call \emph{weak string erasure}. At the
end of this phase, the sender has a classical $n$-bit string $X$ and
the receiver holds an ``erased version'' of the string where a
uniformly random half of the bits of $X$ have been erased. Note that
this primitive is only classical for honest players, as a dishonest
receiver might hold quantum information about the sender's classical
output string.

For the second (purely classical) phase, they propose classical
reductions to build bit commitment and oblivious transfer based on
weak string erasure. Their approach to realize oblivious transfer is
quite involved. It uses interactive hashing~\cite{Savvides07}, for
which the standard classical protocol requires a lot of communication
rounds~\cite{NOVY98}\footnote{A constant-round variant of interactive
  hashing has been proposed in~\cite{DHRS04}. However, it is unclear
  how the weaker security guarantees affect the security proof
  in~\cite{KWW09arxiv}. The use of $\eta$-almost $t$-wise independent
  permutations might render this variant ``prohibitively complicated to implement
  in practice''~\cite{Savvides07}.}.  The analysis is
complicated by the fact that the dishonest receiver holds quantum
information, but can be handled by techniques of min-entropy sampling
developed by K\"onig and Renner~\cite{KR07}. It was left as open
question how to build password-based identification based on weak
string erasure or in general, secure against noisy-quantum-storage
attacks.

\subsection{Our Results and Outline of the Paper}
The main contribution of this paper is the insight that the new
technical tool derived in~\cite{KWW09arxiv} already suffices to prove
secure the original protocols from the bounded-quantum-storage model
for bit commitment, oblivious transfer~\cite{DFRSS07} and
password-based identification~\cite{DFSS07,DFSS10journal}. These original protocols
have the advantage that the classical post-processing is extremely
simpel. No communication-intensive protocols such as interactive
hashing are needed.

Comparing the protocol for oblivious transfer from~\cite{KWW09arxiv}
with our protocol, it turns out that the highly interactive
protocol~\cite{KWW09arxiv} can in theory be shown secure for less
noisy quantum-storage channels if infinitely many pulses are
available, i.e., security holds against a larger class of
adversarial receivers. However, the original protocols with the simpler analysis
presented here outperform the ones from~\cite{KWW09arxiv} in terms of
the security error. Thus, for a fixed number of pulses and a given
security threshold, the simpler protocols and our analysis yield
oblivious transfer of longer bit-strings most of the time.

We show for the first time the security against general noisy-storage
attacks of a direct protocol for password-based identification,
answering an open question posed in~\cite{KWW09arxiv}.

From a theoretical point of view, our insight shows that despite the
generality of the noisy-quantum-storage model, having the right tools
from~\cite{DFRSS07,KWW09arxiv} at hand, the protocols and security proofs do
not need to be much more complicated than in the conceptually simpler
bounded-quantum-storage model.

\begin{table}
\begin{tabular}[h]{|c|c|c|c|}
\hline
& Secure 1-2 OT~~ & ~~Canonical Protocol~~ &~~ Secure Identification ~~\\
\hline
\hline
\cite{WST08,STW09} & individual attacks & Yes & individual attacks\\
\hline
\cite{KWW09arxiv} & general attacks & No & No\\
\hline
This work & general attacks & Yes & general attacks\\
\hline
\end{tabular}
\caption{Summary of previous results in the noisy-quantum-storage
  model and the results presented here.
  \label{fig:comparison}}
\end{table}

\subsection{Outline of the Paper}
In Section~\ref{sec:prelim}, we define concepts and notation and
elaborate on the essential tool of min-entropy splitting in
Section~\ref{sec:ESL}. We present the noisy-quantum-storage and the
key ingredient from~\cite{KWW09arxiv} in
Section~\ref{sec:noisystorage}. Sections~\ref{sec:12OT},
\ref{sec:robust} and~\ref{sec:id} contain the security analyses for
oblivious transfer and password-based identification.

\section{Preliminaries} \label{sec:prelim}

We start by introducing the necessary definitions, tools and technical
lemmas that we need in the remainder of this text.

\subsection{Basic Concepts}
We use $\in_R$ to denote the uniform choice of an element from a set.
We further use $x|_{\setI}$ to denote the string $x=x_1,\ldots,x_n$
restricted to the bits indexed by the set $\setI \subseteq
\{1,\ldots,n\}$. For a binary random variable $C$, we denote by
$\ol{C}$ the bit different from $C$.


\paragraph{Classical-Quantum States}
A \emph{cq-state} $\rho_{XE}$ is a state that is partly classical,
partly quantum, and can be written as
$$
\rho_{XE}=\sum_{x \in \setX} P_X(x) \proj{x} \otimes \rho_E^x \, .
$$
Here, $X$ is a classical random variable distributed over the finite
set $\setX$ according to distribution $P_{X}$, $\set{\ket{x}}_{x \in
  \setX}$ is a set of orthonormal states and the register $E$ is in
state $\rho_E^x$ when $X$ takes on value $x$.


\paragraph{Conditional Independence.}
We also need to express that a random variable $X$ is (close to)
independent of a quantum state $\regE$ {\em when given a random
  variable $Y$}. This means that when given $Y$, the state $\regE$
gives no additional information on $X$. Formally, this is
expressed by requiring that $\rho_{X Y \regE}$ equals (or is close to)
$\rho_{X\leftrightarrow Y \leftrightarrow \regE}$, which is defined
as\footnote{The notation is inspired by the classical setting where
  the corresponding independence of $X$ and $Z$ given $Y$ can be
  expressed by saying that $X \leftrightarrow Y \leftrightarrow Z$
  forms a Markov chain. }
\begin{align} \label{eq:markovdefinition}
\rho_{X\leftrightarrow Y \leftrightarrow \regE} := \sum_{x,y}P_{X
  Y}(x,y)\proj{x} \otimes \proj{y} \otimes \rho_{\regE}^y \, .
\end{align}
In other words, $\rho_{X Y \regE} = \rho_{X\leftrightarrow Y
  \leftrightarrow \regE}$ precisely if $\rho_\regE^{x,y} =
\rho_\regE^{y}$ for all $x$ and $y$.  To further illustrate its
meaning, notice that if the $Y$-register is measured and value $y$ is
obtained, then the state $\rho_{X\leftrightarrow Y \leftrightarrow
  \regE}$ collapses to $ (\sum_{x}P_{X|Y}(x|y)\proj{x} )\otimes
\rho_{\regE}^y$, so that indeed no further information on $x$ can be
obtained from the $\regE$-register. This notation naturally extends to
$\rho_{X\leftrightarrow Y \leftrightarrow \regE|{\cal E}}$ simply by
considering $\rho_{X Y \regE|{\cal E}}$ instead of $\rho_{X Y
  \regE}$. Explicitly, $\rho_{X\leftrightarrow Y \leftrightarrow
  \regE|{\cal E}} = \sum_{x,y}P_{X Y|\ev}(x,y)\proj{x} \otimes
\proj{y} \otimes \rho_{\regE|\ev}^{y}$.

\paragraph{Non-uniformity}
We can say that a quantum adversary has little information about $X$
if the distribution $P_X$ given his quantum state is close to
uniform. Formally, this distance is quantified by the {\em
non-uniformity} of $X$ given $\rho_E = \sum_x P_X(x) \rho_E^x$
defined as
\begin{equation}
d(X|E) := \frac{1}{2}\left\|\,\mathbbm{1}/|\mX| \otimes \rho_E-\sum_{x}P_{X}(x) \proj{x} \otimes \rho_{E}^x\,\right\|_{1} \, .
\end{equation}
Intuitively,
$d(X|E) \leq \eps$ means that the distribution of $X$ is $\eps$-close to
uniform even given $\rho_E$, i.e., $\rho_E$ gives hardly any
information about $X$.
A simple property of the non-uniformity which follows from its
definition is that it does not change given independent information. Formally,
\begin{equation}
d(X|E,D) =d(X|E)
\label{eq:indep}
\end{equation}
for any cqq-state of the form $\rho_{XED}=\rho_{XE} \otimes \rho_D$.

 \subsection{Entropic Quantities}
Throughout this paper we use a number of entropic quantities. The
\emph{binary-entropy} function is defined as $h(p) \assign - p \log
p - (1-p) \log (1-p)$, where $\log$ denotes the logarithm to base 2
throughout this paper.

\subsubsection{(Conditional) Smooth Min-Entropy} 
We are concerned with the situation where an attacker holds quantum
information in register $E$ about a classical variable $X$,
described by a \emph{c}lassical-\emph{q}uantum state (cq-state) of the form
\[ \rho_{XE} = \sum_x P_X(x) \proj{x} \otimes \rho_E^x \, .
\]
We define the \emph{guessing probability of $X$ given $E$} as the
success probability of the best measurement carried out on $E$ in
order to guess $X$,
\begin{align*}
 \guess(X|E) \assign \max_{\set{M_x}} \sum_x P_X(x) \Tr(M_x \rho_E^x)
 \, ,
\end{align*}
where the maximisation is over all POVMs $\set{M_x}$ acting on
register $E$. The conditional min-entropy of $X$ given $E$ is defined
as $\hmin(X|E) \assign -\log \guess(X|E)$.

In case the adversary's information $E$ is described by a classical
variable $Y$, one can show that the guessing probibility becomes
$$
\guess(X|Y) \assign \sum_y P_Y(y) \max_x
P_{X|Y}(x|y) = \sum_y \max_x
P_{XY}(x,y) \, .
$$ 
More generally, we define $\hmin(X \ev|Y)$ for any event $\ev$ as $\hmin(X\ev|Y)
\assign - \log \big( \guess(X\ev|Y) \big)$ where%
\footnote{$\guess(X\ev|Y)$ can be understood as the optimal probability in guessing $X$ {\em and} have $\ev$ occur, when given~$Y$. }
$$
\guess(X\ev|Y) \assign \sum_y P_Y(y) \max_x P_{X\ev|Y}(x|y) = \sum_y \max_x P_{XY\ev}(x,y) \, .
$$ 
The \emph{conditional smooth min-entropy} $\hmine{\eps}{X|Y}$ is then
defined as
\[ \hmine{\eps}{X|Y} \assign \max_{\ev}
\hmin(X \ev|Y) 
\]
where the max is over all events $\ev$ with $P[\ev] \geq 1 - \eps$. 

Obviously, the unconditional versions of smooth and non-smooth
min-entropy are obtained by using a constant $Y$. Furthermore,
conditional smooth min-entropy can also be defined for quantum side
information, we refer to~\cite{Renner05,KWW09arxiv} for the formal
definitions.

In this paper, we will use the fact that smooth min-entropy obeys the
chain rule~\cite[Theorem 3.2.12]{Renner05}, i.e. for a ccq-state
$\rho_{XYE}$, we have
\begin{align} \label{eq:chain}
\hmin^\eps(X|YE) \geq \hmin^\eps(X|E) - \log |\mY| \, ,
\end{align}
where $|\mY|$ is the alphabet size of $Y$.

\subsection{Min-Entropy Splitting} \label{sec:ESL}
The key ingredients for the security proofs of both the 1-2 OT and
the secure identification schemes in~\cite{DFRSS07,DFSS07}
are uncertainty relations and variants of the \emph{min-entropy
splitting lemma}. In this section, we present an overview over the
variants known and derived for the bounded-quantum-storage model and
point out how they can be applied in the noisy-quantum-storage model.

If the joint entropy of two random variables $X_0$ and $X_1$ is large,
then one is tempted to conclude that at least one of $X_0$ and $X_1$
must still have large entropy, e.g.\ half of the original
entropy. Whereas such a reasoning is correct for Shannon entropy (it
follows easily from the chain rule and the fact that conditioning
does not increase the entropy), it
is in general incorrect for min-entropy.
There exist joint probability
distributions $P_{X_0 X_1}$ for which guessing $X_0$ and $X_1$
individually is easy, but guessing $X_0$ and $X_1$ simultaneously is
hard. Intuitively, for these distributions, guessing the value $x_i$
with the highest probability is easy, because the probabilities over
the other variable $X_{1-i}$ are uniform, but still sum up to a
significant mass.

However, the following basic version of the min-entropy splitting
lemma, which first appeared in a preliminary version
of~\cite{Wullschleger07} and was later developed further in the
context of randomness extraction~\cite{KR07}, shows that the intuition
about splitting the min-entropy {\em is} correct in a randomized
sense. This lemma (with a slightly different notion of min-entropy) is
used in the security proof of the 1-2 OT scheme in~\cite{DFRSS07}.
\begin{lemma}[Min-Entropy-Splitting Lemma~\cite{DFRSS07}]\label{lemma:ESLold}
  Let $\varepsilon \geq 0$, and let $X_0,X_1$ and $Z$ be random
  variables with \mbox{$\hmin^{\varepsilon}(X_0 X_1|Z) \geq \alpha$}.
  Then, there exists a random variable $D \in \set{0,1}$ such that
$$
\hmin^{\varepsilon}(X_D| D Z) \geq \alpha/2 -1 \, .
$$
\end{lemma}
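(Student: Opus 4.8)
The plan is to reduce to the non-smooth case and then write down an explicit binary variable $D$. By definition of smooth min-entropy there is an event $\ev$ with $\Pr[\ev]\ge 1-\eps$ and
\[ \guess(X_0X_1\ev|Z)=\sum_z \max_{x_0,x_1}P_{X_0X_1Z\ev}(x_0,x_1,z)\le 2^{-\alpha}. \]
I will keep this same $\ev$ throughout, so it suffices to construct $D\in\{0,1\}$ with $\guess(X_D\ev|DZ)\le 2^{1-\alpha/2}$; this gives $\hmin(X_D\ev|DZ)\ge \alpha/2-1$ and hence $\hmine{\eps}{X_D|DZ}\ge\alpha/2-1$.

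First I would define $D$ as a deterministic function of $X_0$ and $Z$ alone: set $D\assign 1$ if $P_{X_0|Z}(X_0|Z)>2^{-\alpha/2}$, and $D\assign 0$ otherwise. Intuitively, $D=0$ flags the sub-event on which $X_0$ is already hard to guess from $Z$, whereas $D=1$ passes the task on to $X_1$. It is important that this definition refers only to the fixed distribution $P_{X_0X_1Z}$ and not to $\ev$, so that conditioning on $\ev$ later does not disturb it.

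Next I would split $\guess(X_D\ev|DZ)=\sum_z\max_{x_0}P_{X_0,D=0,Z,\ev}(x_0,z)+\sum_z\max_{x_1}P_{X_1,D=1,Z,\ev}(x_1,z)$ and bound the two terms separately. In the first term the summand is at most $P_{X_0|Z}(x_0|z)P_Z(z)$ and vanishes unless $P_{X_0|Z}(x_0|z)\le 2^{-\alpha/2}$, hence is at most $2^{-\alpha/2}P_Z(z)$; summing over $z$ gives at most $2^{-\alpha/2}$. For the second term I would use a counting argument: for each fixed $z$ there are at most $2^{\alpha/2}$ values $x_0$ with $P_{X_0|Z}(x_0|z)>2^{-\alpha/2}$, since such probabilities exceed $2^{-\alpha/2}$ and sum to at most $1$. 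Therefore
\[ \sum_z\max_{x_1}\!\!\sum_{x_0:\,P_{X_0|Z}(x_0|z)>2^{-\alpha/2}}\!\! P_{X_0X_1Z\ev}(x_0,x_1,z)\le 2^{\alpha/2}\sum_z\max_{x_0,x_1}P_{X_0X_1Z\ev}(x_0,x_1,z)\le 2^{\alpha/2}\cdot 2^{-\alpha}. \]
Adding the two bounds yields $\guess(X_D\ev|DZ)\le 2\cdot 2^{-\alpha/2}=2^{1-\alpha/2}$, which finishes the proof.

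The only delicate points I anticipate are (i) specifying $D$ without reference to the smoothing event, so that the single event $\ev$ simultaneously witnesses the hypothesis and the conclusion, and (ii) the counting bound of at most $2^{\alpha/2}$ ``heavy'' values of $x_0$ per value of $z$; this is the step that produces the factor $2^{\alpha/2}$ and, through the sum of the two equal terms $2^{-\alpha/2}+2^{-\alpha/2}$, the additive loss of $1$ in the final entropy bound. Everything else is routine manipulation of the identity $\guess(X\ev|Y)=\sum_y\max_x P_{XY\ev}(x,y)$.
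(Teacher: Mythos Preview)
Your proposal is correct and follows essentially the same argument as the paper: you choose the same threshold-based definition of $D$ (the paper sets $D=0$ iff $P_{X_0|Z}(X_0|z)<2^{-\alpha/2}$, differing from yours only in which side of the threshold the boundary case lands on), and you use the identical two-term split with the same counting bound on the number of ``heavy'' $x_0$-values per $z$. Your explicit remark that $D$ must be defined without reference to the smoothing event $\ev$ is a point the paper leaves implicit but is indeed what makes the argument go through.
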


\begin{proof}
 Let $\ev$ be an event such that $P[\ev] \geq 1 - \eps$ and 
 \begin{equation} \label{eq:12assumption} \sum_z P_Z(z) \cdot
   \max_{x_0,x_1} P_{X_0X_1\ev|Z}(x_0,x_1|z) \leq 2^{-\alpha} \, .
  \end{equation}
  By assumption, such an events exist.\footnote{In case $\eps = 0$,
    i.e., $\alpha$ lower bounds the ordinary (rather then the smooth)
    min-entropy, the $\ev$ is the events ``that always occurs'' and
    can be ignored from the rest of the analysis. } For a given $z$,
  we define $D$ to be 0 if and only if $P_{X_0|Z}(X_0|z) <
  2^{-\alpha/2}$. Then,
\begin{align} \begin{split} \label{eq:12ESL1}
  \sum_z P_Z(z) \cdot \max_{x_0} P_{X_0 D \ev |Z}(x_0,0|z) &\leq
  \sum_z P_Z(z) \cdot \max_{x_0} P_{X_0 D |Z}(x_0,0|z)\\
&=  \sum_z P_Z(z) \cdot \max_{x_0} P_{X_0|Z}(x_0|z) P_{D|X_0
    Z}(0|x_0,z) < 2^{-\alpha/2} \, ,
\end{split} \end{align}
because either $P_{X_0|Z}(x_0|z) < 2^{-\alpha/2}$ or $P_{D|X_0
  Z}(0|x_0,z) = 0$ by definition of $D$. On the other hand, we have
\begin{align} \begin{split} \label{eq:12ESL2}
  \sum_z P_Z(z) \cdot \max_{x_1} P_{X_1D\ev|Z}(x_1,1|z) &= \sum_z
  P_Z(z) \cdot \max_{x_1} \sum_{x_0} P_{X_0 X_1 D \ev|Z}(x_0,x_1,1|z)\\
  &\leq 2^{\alpha/2} \sum_z P_Z(z) \cdot \max_{x_0,x_1} P_{X_0
    X_1\ev|Z}(x_0,x_1|z) \leq 2^{-\alpha/2} \, ,
\end{split} \end{align} where the last inequality follows from the
assumption~\eqref{eq:12assumption} and the first is a consequence of
the fact that the number of non-zero summands (in the sum over $x_0$)
cannot be larger than $2^{\alpha/2}$, because for any $x_0$ with
$P_{X_0 X_1 D | Z}(x_0,x_1,1|z) > 0$, it also holds (by the definition
of $D$) that $P_{X_0|Z}(x_0|z) \geq 2^{-\alpha/2}$ and the sum over
all those $x_0$ would exceed 1 if there were more than $2^{\alpha/2}$
summands.

Combining \eqref{eq:12ESL1} and \eqref{eq:12ESL2}, we conclude 
that
\[ \guess(X_D \ev|D Z) = \sum_d \sum_z P_Z(z) \max_{x} P_{X_D D
  \ev|Z}(x_d,d|z) \leq 2 \cdot 2^{-\alpha/2} \, .
\]
The claim now follows by definition of $\hmin^\eps$.
\end{proof}

%
In order to prove the security of the identification scheme (see
Section~\ref{sec:id}), a more refined version of the min-entropy
splitting lemma was derived in \cite{DFSS10journal}. We reproduce it
here for convenience.
\begin{lemma}[Entropy-Splitting Lemma~\cite{DFSS10journal}]\label{lemma:basicES}
  Let $\eps \geq 0$. Let $X_1,\ldots,X_m$ and $Z$ be random variables
  such that $\hmin^{\eps}(X_i X_j|Z) \geq \alpha$ for all $i \neq
  j$. Then there exists a random variable $V$ over $\set{1,\ldots,m}$
  such that for any {\em independent} random variable $W$ over
  $\set{1,\ldots,m}$ with $\hmin(W) \geq 1$, 
$$
\hmin^{2 m\eps}(X_W|VWZ,V\!\neq\!W) \geq \alpha/2 - \log(m) 
 - 1 \, .
$$
\end{lemma}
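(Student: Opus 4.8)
The plan is to mimic the construction in Lemma~\ref{lemma:ESLold}, but applied to all pairs simultaneously. First I would fix an event $\ev$ with $P[\ev]\geq 1-\eps$ witnessing the smooth min-entropy bounds; here one has to be slightly careful, since the hypothesis gives one event per pair $(i,j)$. I would take $\ev$ to be the intersection of the $m(m-1)/2$ events (or really, it suffices to choose them consistently so that a single bad event of probability at most $m^2\eps$ suffices — this is where the $2m\eps$ in the conclusion comes from, up to the usual union-bound slack). Thus after conditioning on $\ev$ we may assume $\guess(X_iX_j|Z) \le 2^{-\alpha}$ for all $i\neq j$, at the cost of a factor roughly $m^2\eps$ lost into the smoothing parameter.

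Next I would define $V$ by a ``first index whose conditional probability is small'' rule: for each $z$, let $V(z)$ be the smallest index $i$ such that $P_{X_i|Z}(X_i|z) < 2^{-\alpha/2}$, and set $V=1$ (say) if no such index exists. The point is that for $i = V$ we have smallness of $P_{X_V|Z}$ directly, while for any $j$ with $j$ being a \emph{possible} value of $W\neq V$ — in particular $j$ could be anything, but crucially when we condition on $V\neq W$ and look at $W=j$, the index $V$ that actually occurred is some $i\ne j$ with $P_{X_i|Z}(x_i|z)\geq 2^{-\alpha/2}$ for the realized $x_i$ — we can run the counting argument of \eqref{eq:12ESL2}: the number of such $x_i$ is at most $2^{\alpha/2}$, and $\guess(X_iX_j|Z)\le 2^{-\alpha}$, so summing over $x_i$ costs a factor $2^{\alpha/2}$ and we land at $2^{-\alpha/2}$ again. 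Since $W$ ranges over $m$ values and, given $V\ne W$, we must union-bound over the $\le m$ possibilities for the realized value of $V$, we pick up an extra factor $m$ (hence $-\log m$, combined with the $-\log m$ already present from naming $W$ inside the conditioning), plus the factor $2$ from the two ``sides'' of the argument (hence $-1$).

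More concretely, the bound I would aim to prove is
\begin{align*}
\guess\big(X_W\ev\,\big|\,VWZ,\, V\!\neq\!W\big) \;\le\; 2m\cdot 2^{-\alpha/2}\, ,
\end{align*}
by splitting according to whether, for the given $(v,w,z)$ with $v\neq w$, the value $P_{X_w|Z}(x_w|z)$ is below $2^{-\alpha/2}$ or not. In the first case one bounds $\max_{x_w}P_{X_w \ev|Z}(x_w|z)$ directly by $2^{-\alpha/2}$ (this requires that the event $\{V=v, W=w, V\ne w\}$ is, conditioned on $Z=z$ and $X_w$, independent of $\ev$ up to the bound; since $W$ is independent of everything and $V$ is a function of $(X_1,\dots,X_m,Z)$ this is where the $X_iX_j$-pair structure is used, not just $X_w$ alone). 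In the second case, $x_w$ with large $P_{X_w|Z}$ combined with $v$ being the realized first-small-index forces $v\ne w$ and forces $P_{X_v|Z}(x_v|z)<2^{-\alpha/2}$ while $P_{X_w|Z}(x_w|z)\ge 2^{-\alpha/2}$, and we bound $P_{X_w X_v \ev|Z}\le 2^{-\alpha}$ and sum over the at most $2^{\alpha/2}$ heavy $x_w$'s. Accounting for $\hmin(W)\ge 1$ (so $W$ spreads over at least two values and $P[V\ne W]$ is not too small — this is what makes the conditioning on $V\ne W$ only cost a bounded factor) and the $\sum_v$ over at most $m$ values finishes the estimate, and then $\hmine{2m\eps}{X_W|VWZ,V\ne W} \ge -\log(2m\cdot 2^{-\alpha/2}) = \alpha/2 - \log m - 1$.

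\medskip\noindent\textbf{Main obstacle.} The delicate point is the bookkeeping of the smoothing parameter and of the conditioning event $\{V\ne W\}$: one must check that conditioning on $V\ne W$ (an event depending on both $V$ and the independently-drawn $W$) does not blow up the relevant probabilities uncontrollably, which is exactly where $\hmin(W)\ge 1$ is needed, and one must track how the single event $\ev$ interacts with the maximization over $x$ when $V$ (a function of all the $X_i$'s) is in the conditioning — here the pairwise hypothesis $\hmin^\eps(X_iX_j|Z)\ge\alpha$, rather than a bound on each $X_i$ alone, is essential, and getting the union bound to collapse to the stated $2m\eps$ rather than something like $m^2\eps$ is the part I would expect to need the most care.
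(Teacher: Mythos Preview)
Your overall architecture is right, but the definition of $V$ is backwards, and this is not a cosmetic slip: with your choice the counting step does not go through. You set $V$ to be the \emph{first index whose conditional probability is small}; the paper sets $V$ to be the first index $j$ with $P_{X_j|Z}(X_j|z)\geq 2^{-\alpha/2}$ (defaulting to $m$). The reason this orientation matters is that the claim concerns $X_W$ for $W\neq V$, and the two cases are handled as follows: if $W=i<j=V$ then by construction $P_{X_i|Z}(x_i|z)<2^{-\alpha/2}$ and one bounds directly; if $W=i>j=V$ then $X_j=X_V$ is \emph{heavy}, so there are at most $2^{\alpha/2}$ possible values of $x_j$, and one writes $P_{X_iV\ev_{ij}|Z}(x_i,j|z)=\sum_{x_j}P_{X_iX_jV\ev_{ij}|Z}(x_i,x_j,j|z)\le 2^{\alpha/2}\max_{x_i,x_j}P_{X_iX_j\ev_{ij}|Z}\le 2^{-\alpha/2}$. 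Notice that the sum is over $x_v$, not over $x_w$: we take a \emph{max} over $x_w$, and the factor $2^{\alpha/2}$ counts the heavy values of $X_V$. With your ``first small'' rule, $X_V$ is light, the light values of $X_v$ can be arbitrarily many, and your proposed ``sum over the at most $2^{\alpha/2}$ heavy $x_w$'s'' is not the quantity that appears. Your own text reflects this confusion: you first say $V$ is the first small index, and two lines later write that ``the index $V$ that actually occurred is some $i\ne j$ with $P_{X_i|Z}(x_i|z)\geq 2^{-\alpha/2}$''.

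On the smoothing, intersecting the $\binom{m}{2}$ events only gives an error of order $m^2\eps$, not $2m\eps$; there is no ``usual slack'' that closes this gap. The paper avoids the intersection entirely: it keeps the pairwise events $\ev_{ij}$ separate throughout the per-pair estimates, and at the end defines a single event $\ev:=\ev_{WV}$, random in $(W,V)$. Then $P[\bar\ev]=\sum_{i,j}P_{VW\bar\ev_{ij}}(j,i)=\sum_{i,j}P_{V\bar\ev_{ij}}(j)P_W(i)\le \sum_i P_W(i)\cdot m\eps=m\eps$, and since $\hmin(W)\ge 1$ forces $P[V\neq W]\ge \tfrac12$, conditioning gives $P[\bar\ev\mid V\neq W]\le 2m\eps$. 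This same $\hmin(W)\ge 1$ is also what controls the conditioning cost in the guessing bound: one pays a factor $2$ (the $-1$), and a single factor $m$ from summing over the realized value $j$ of $V$; the sum over $W=i$ is weighted by $P_W(i)$ and contributes no additional factor, so your bookkeeping of ``two $-\log m$'s'' is off.
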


\begin{proof}
  For any pair $i \neq j$ let $\ev_{ij}$ be an event such that
  $P[\ev_{ij}] \geq 1 - \eps$ and 
  \begin{equation} \label{eq:assumption} \sum_z P_Z(z) \cdot
    \max_{x_i,x_j} P_{X_iX_j\ev_{ij}|Z}(x_i,x_j|z) \leq 2^{-\alpha}
  \end{equation}
  for all $x_i \in \mX_i$, $x_j \in \mX_j$ and $z \in \cal Z$. By
  assumption, such events exist.\footnote{In case $\eps = 0$, i.e.,
    $\alpha$ lower bounds the ordinary (rather then the smooth)
    min-entropy, the $\ev_{ij}$ are the events ``that always occur''
    and can be ignored from the rest of the analysis. } For any
  $j=1,\ldots,m-1$ define
$$
L_j =
\Set{(x_1,\ldots,x_m,z)}{P_{X_1|Z}(x_1|z),\ldots,P_{X_{j-1}|Z}(x_{j-1}|z)
  < 2^{-\alpha/2} \wedge P_{X_j|Z}(x_j|z) \geq 2^{-\alpha/2}}
$$
Informally, $L_j$ consists of the tuples $(x_1,\ldots,x_m,z)$, where
$x_j$ has ``large'' probability given $z$ whereas all previous entries
have small probabilities. We define $V$ as follows. We let $V$ be the
index $j \in \set{1,\ldots,m-1}$ such that $(X_1,\ldots,X_m,Z) \in
L_j$, and in case there is no such $j$ we let $V$ be $m$. Note that if
there does exist such an $j$ then it is unique.

We need to show that this $V$ satisfies the claim. Fix $j \in
\set{1,\ldots,m}$. Clearly, for $i < j$,
\begin{align} \begin{split} \label{eq:ESL1}
  \sum_z P_Z(z) \cdot \max_{x_i} P_{X_i V \ev_{ij} |Z}(x_i,j|z) &\leq
  \sum_z P_Z(z) \cdot \max_{x_i} P_{X_i V |Z}(x_i,j|z)\\
&=  \sum_z P_Z(z) \cdot \max_{x_i} P_{X_i|Z}(x_i|z) P_{V|X_i
    Z}(j|x_i,z) < 2^{-\alpha/2} \, .
\end{split} \end{align}
Indeed, either $P_{X_i|Z}(x_i|z) < 2^{-\alpha/2}$ or $P_{V|X_i
  Z}(j|x_i,z) = 0$ by definition of $V$. 
Consider now $i > j$. Note that  
\begin{align} \begin{split} \label{eq:ESL2}
  \sum_z P_Z(z) \cdot \max_{x_i} P_{X_iV\ev_{ij}|Z}(x_i,j|z) &= \sum_z
  P_Z(z) \cdot \max_{x_i} \sum_{x_j} P_{X_i X_j
    V\ev_{ij}|Z}(x_i,x_j,j|z)\\
  &\leq 2^{\alpha/2} \sum_z P_Z(z) \cdot \max_{x_i,x_j} P_{X_i
    X_j\ev_{ij}|Z}(x_i,x_j|z) \leq 2^{-\alpha/2} \, ,
\end{split} \end{align}
where the last inequality follows from the
assumption~\eqref{eq:assumption} and the first is a consequence of the
fact that the number of non-zero summands (in the sum over $x_j$)
cannot be larger than $2^{\alpha/2}$, because for any $x_j$ with
$P_{X_i X_j V \ev_{ij} | Z}(x_i,x_j,j|z) > 0$, it also holds that
$P_{X_j|Z}(x_j|z) \geq 2^{-\alpha/2}$ and the sum over all those $x_j$
would exceed 1 if there were more than $2^{\alpha/2}$ summands.
Note that per-se, $\ev_{ij}$ is only defined in the probability space
given by $X_i$, $X_j$ and $Z$, but it can be naturally extended to the
probability space given by $X_1,\ldots,X_n,Z,V$ by assuming it to be
independent of anything else when given $X_i,X_j,Z$, so that e.g.\
$P_{X_i V \ev_{ij}|Z}$ is indeed well-defined.

Consider now an independent random variable $W$ with $\hmin(W) \geq 1$. 
By the assumptions on $W$ it holds that $P[V\!\neq\!W] \geq \frac12$
and $P_{X_W V W Z}(x_i,j,i,z) = P_{X_i V W Z}(x_i,j,i,z) = P_{X_i V
  Z}(x_i,j,z) P_W(i)$.  In the probability space determined by the
random variables $X_1,\ldots,X_n,V,W,Z$ and all of the events $\ev_{ij}$, define the event
$\ev$ as $\ev \assign \ev_{WV}$, so that $P_{X_W V W \ev|Z}(x_i,j,i|z) =
P_{X_i V W \ev_{ij}|Z}(x_i,j,i|z) = P_{X_i V \ev_{ij}|Z}(x_i,j|z)
P_W(i)$. Note that
$$
P[\bar{\ev}] = \sum_{i,j} P_{VW \bar{\ev}_{WV}}(j,i)
= \sum_{i,j} P_{V \bar{\ev}_{ij}}(j) P_W(i)
\leq \sum_{i,j} P[\bar{\ev}_{ij}] P_W(i)
\leq m \eps
$$
and thus $P[\bar{\ev}|V\!\neq\!W] \leq P[\bar{\ev}]/P[V\!\neq\!W]
\leq 2m \eps$.  From the above, it follows that
\begin{align*}
  \guess&(X_W,\ev|VWZ,V\neq W) = \sum_{z,i,j} \max_x P_{X_W V W Z \ev|V \neq
    W}(x,j,i,z)
  \leq 2 \sum_{z,i \neq j} \max_x P_{X_W V W Z \ev}(x,j,i,z)\\
  &=2 \sum_{z,i \neq j} P_Z(z) \cdot \max_x P_{X_W V W \ev|Z}(x,j,i|z)
  = 2 \sum_{z,i \neq j} P_Z(z) \cdot \max_{x_i} P_{X_i V \ev_{ij}|Z}(x_i,j|z) \cdot P_W(i) \\
  &= 2 \sum_{i} P_W(i) \sum_{j\neq i} \sum_z P_Z(z) \cdot
  \max_{x_i} P_{X_i V \ev_{ij}|Z}(x_i,j|z) \leq 2m \cdot 2^{-\alpha/2} \, ,
\end{align*}
where we used \eqref{eq:ESL1} and \eqref{eq:ESL2} in the last
inequality. The claim now follows by definition of $\hmin^\eps$.
\end{proof}

\subsection{Quantum Uncertainty Relation. } At the very core of our
security proofs lies (a special case of) the quantum uncertainty
relation from~\cite{DFRSS07}\footnote{In~\cite{DFRSS07}, a stricter
  notion of conditional smooth min-entropy was used, which in
  particular implies the bound as stated here. }, that lower bounds
the (smooth) min-entropy of the outcome when measuring an arbitrary
$n$-qubit state in a random basis $\theta \in \set{0,1}^n$.

\begin{theorem}[Uncertainty Relation~\cite{DFRSS07}]\label{thm:uncertainty}
Let $\regE$ be an arbitrary fixed $n$-qubit state. Let
$\Theta$ be uniformly distributed over $\set{+,\times}^n$ (independent
of $\regE$), and let $X\in\{0,1\}^n$ be the random variable for the outcome of
measuring $\regE$ in basis~$\Theta$.
Then, for any $\delta > 0$, the conditional smooth min-entropy is lower bounded by
$$
\hmin^\eps(X|\Theta) \geq \Big(\frac12 - 2\delta\Big) n 
$$
with $\eps \leq 2^{-\sigma(\delta)n}$ and 
\begin{equation} \label{eq:sigma}
\sigma(\delta) \assign
\frac{\delta^2 \log(e)}{32(2-\log(\delta))^2} \, .
\end{equation}
\end{theorem}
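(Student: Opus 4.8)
The plan is to reduce the theorem to a purely classical large-deviation estimate and then to establish that estimate by a martingale argument in which the $n$ qubits of $\regE$ are measured one after another. Since, given $\Theta$, the variable $X$ carries no quantum side information, it suffices to exhibit a single good event realising the smoothing. Let $\ev$ be the event on the joint distribution $P_{X\Theta}$ that $P_{X|\Theta}(X|\Theta)\le 2^{-(1/2-2\delta)n}$. Directly from the definitions,
\[
\guess(X\ev|\Theta)=\sum_{\theta}P_\Theta(\theta)\max_{x}P_{X|\Theta}(x|\theta)\,\mathbbm{1}\big[(x,\theta)\in\ev\big]\le 2^{-(1/2-2\delta)n},
\]
so $\hmin(X\ev|\Theta)\ge(1/2-2\delta)n$, and it remains only to prove that
\[
P[\bar\ev]=\Pr_{X\Theta}\!\big[-\log P_{X|\Theta}(X|\Theta)<(1/2-2\delta)n\big]\le 2^{-\sigma(\delta)n}.
\]

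Because measurements on distinct qubits commute and $\Theta=(\Theta_1,\dots,\Theta_n)$ is chosen independently of $\regE$, we may view $X$ as produced by measuring qubit $i$ in basis $\Theta_i$ sequentially for $i=1,\dots,n$. Put $Z_i:=-\log P_{X_i|X_{<i}\Theta_{\le i}}(X_i|X_{<i}\Theta_{\le i})\ge 0$ and let $\mathcal{F}_{i-1}$ be generated by $(X_{<i},\Theta_{<i})$; the chain rule gives $-\log P_{X|\Theta}(X|\Theta)=\sum_{i=1}^n Z_i$. The single-qubit ingredient is the elementary Maassen--Uffink entropic uncertainty relation for the two BB84 bases: every one-qubit state has $\H(\text{outcome in }+)+\H(\text{outcome in }\times)\ge 1$. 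Applying it to the (random) post-measurement state of qubit $i$ gives, for every fixed history,
\[
\E[Z_i\mid\mathcal{F}_{i-1}]=\tfrac12\,\E[Z_i\mid\mathcal{F}_{i-1},\Theta_i=+]+\tfrac12\,\E[Z_i\mid\mathcal{F}_{i-1},\Theta_i=\times]\ge\tfrac12,
\]
hence $\sum_i\E[Z_i\mid\mathcal{F}_{i-1}]\ge n/2$.

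The $Z_i$ are unbounded from above (an outcome can have tiny conditional probability), so I would truncate them at a level $T$ of order $2-\log\delta$: set $\tilde Z_i:=\min(Z_i,T)\in[0,T]$. Since $(-\log p)\,p\le T2^{-T}$ whenever $p\le 2^{-T}$, the truncation costs at most $T2^{-T}$ in conditional expectation, which is below $\delta$ for this choice of $T$; hence $\E[\tilde Z_i\mid\mathcal{F}_{i-1}]\ge\tfrac12-\delta$ and $\sum_i\E[\tilde Z_i\mid\mathcal{F}_{i-1}]\ge(\tfrac12-\delta)n$. Now $\sum_{i}\big(\tilde Z_i-\E[\tilde Z_i\mid\mathcal{F}_{i-1}]\big)$ is a martingale whose increments lie in an interval of length at most $T$, so Azuma--Hoeffding yields $\Pr\big[\sum_i\tilde Z_i<(\tfrac12-2\delta)n\big]\le\exp(-2\delta^2 n/T^2)$. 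Since $\sum_i Z_i\ge\sum_i\tilde Z_i$, this bounds $P[\bar\ev]$; converting to base $2$ and taking $T=8(2-\log\delta)$ turns the exponent into $\sigma(\delta)n$ with $\sigma(\delta)=\delta^2\log(e)/(32(2-\log\delta)^2)$ exactly as stated.

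The one delicate point is this final balancing act: the truncation threshold $T$ must be large enough that the expectation lost to truncation stays below $\delta$, yet small enough that the factor $T^2$ in the denominator of the Azuma exponent still leaves a bound of the advertised strength — and it is precisely the logarithmic dependence $T\sim\log(1/\delta)$ forced by the first requirement that accounts for the $(2-\log\delta)^2$ in $\sigma(\delta)$. (One also spends $\delta$ on the truncation loss and a further $\delta$ on the Azuma deviation, which is why the entropy rate is $\tfrac12-2\delta$ rather than $\tfrac12-\delta$.) Everything else — the chain rule, the single-qubit uncertainty relation, and the martingale bookkeeping — is routine.
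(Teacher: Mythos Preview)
The paper does not itself prove this theorem: it is stated in the preliminaries and cited from \cite{DFRSS07} as a tool, with the footnote noting that the original reference uses a slightly stricter notion of smooth min-entropy. Your proposal is correct and is essentially the proof given in \cite{DFRSS07}: sequential measurement of the qubits, the single-qubit Maassen--Uffink relation to guarantee $\E[Z_i\mid\mathcal{F}_{i-1}]\ge\tfrac12$, truncation of the unbounded increments at level $T\sim 2-\log\delta$, and a one-sided Azuma--Hoeffding bound on the resulting bounded martingale. The bookkeeping with the constants (the choice $T=8(2-\log\delta)$ and the Hoeffding form $\exp(-2t^2/(nT^2))$ for increments in an interval of length $T$) indeed reproduces $\sigma(\delta)=\delta^2\log(e)/(32(2-\log\delta)^2)$ exactly.
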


\subsection{Privacy Amplification}
We will make use of two-universal hash functions. A class $\setF$ of
functions $f: \01^n \rightarrow \01^\ell$ is called two-universal, if
for all $x\neq y \in \01^n$, we have $\Pr_{f \in_R \setF}[f(x) = f(y)]
\leq 2^{-\ell}$~\cite{CW79}.
The following theorem expresses how the application of hash functions
increases the privacy of a random variable X given a quantum
adversary holding $\rho_E$, the function $F$ and a classical random
variable $U$:

\begin{theorem}[\cite{Renner05,DFRSS07}] \label{thm:PA} Let
  $\setF$ be a class of two-universal hash functions from $\01^n$ to
  $\01^\ell$.  Let $F$ be a random variable that is uniformly and
  independently distributed over $\setF$, and let $\rho_{XUE}$ be a
  ccq-state. Then, for any $\eps \geq 0$,
$$
d(F(X)|F,U,E) \leq 2^{-\frac{1}{2}\left(\hminee{X|UE} - \ell
    \right)-1}+ \eps \, .
$$
\end{theorem}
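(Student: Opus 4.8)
The plan is to split the statement into the non-smooth case $\eps = 0$, which is the quantum Leftover Hash Lemma and carries all the real content, and a short reduction of the general case to it. At no point is the classical register $U$ special: $UE$ together is again a classical--quantum system, so I would simply regard it as the adversary's side information and suppress $U$ from the notation.

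\emph{The non-smooth bound.} Here I want $d(F(X)|F,U,E) \le 2^{-\frac12(\hmin(X|UE)-\ell)-1}$ for any ccq-state $\rho_{XUE}$ and $F$ uniform over $\setF$. I would run the standard two-norm argument. First, bound the trace-norm quantity defining $d(F(X)|FUE)$ by a Hilbert--Schmidt norm in which the relevant operator is conjugated by $\rho_{UE}^{-1/4}$ on both sides; this weighting is exactly what makes the final bound independent of $\dim E$, while the classical parts cost only a factor $\sqrt{2^\ell}$. Next, take the expectation over $F$ and expand the resulting weighted collision term as a double sum over pairs $(x,x')$: the diagonal $x = x'$ contributes the conditional collision term $2^{-H_2(X|UE)}$ (with reference $\rho_{UE}$), while on the off-diagonal $x \ne x'$ two-universality of $\setF$ bounds each $\Pr_F[F(x) = F(x')]$ by $2^{-\ell}$; the off-diagonal contribution is then cancelled precisely by the subtraction of the uniform reference $\mathbbm 1/2^\ell \otimes \rho_{FUE}$ built into $d$, so only the diagonal term survives. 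Using $\hmin(X|UE) \le H_2(X|UE)$ and Jensen's inequality to pull $\E_F$ inside the square root gives $\E_F[d(F(X)|FUE)] \le \tfrac12\sqrt{2^\ell\cdot 2^{-\hmin(X|UE)}} = 2^{-\frac12(\hmin(X|UE)-\ell)-1}$, which in particular holds for a uniformly chosen $F$.

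\emph{Smoothing.} Let $\ev$ be an event with $\Pr[\ev] \ge 1-\eps$ realising $\hminee{X|UE} = \hmin(X\ev|UE)$, and consider the sub-normalised state $\rho_{XUE\wedge\ev}$ obtained by restricting to $\ev$. By construction its conditional min-entropy equals $\hmin(X\ev|UE) = \hminee{X|UE}$, so the sub-normalised variant of the non-smooth bound applied to $\rho_{XUE\wedge\ev}$ produces the term $2^{-\frac12(\hminee{X|UE}-\ell)-1}$; it is essential to work with the sub-normalised (event-restricted) state rather than the normalised conditional state, since the latter would only give min-entropy $\le \hminee{X|UE}$. It then remains to compare this with $d(F(X)|F,U,E)$ formed from the true state. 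Since $\|\rho_{XUE}-\rho_{XUE\wedge\ev}\|_1 = \Pr[\bar{\ev}] \le \eps$, and the map sending a state of $XUE$ to the operator inside $d(F(X)|FUE)$ is a difference of two trace-non-increasing completely positive maps (adjoin the uniform $F$, copy $F(X)$ into a fresh register, then either trace out $F(X)$ and tensor back $\mathbbm 1/2^\ell$ or not), this replacement changes $d$ by at most $\tfrac12\cdot 2\cdot\eps = \eps$. Combining the two parts yields $d(F(X)|F,U,E) \le 2^{-\frac12(\hminee{X|UE}-\ell)-1} + \eps$.

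\emph{Main obstacle.} The genuinely delicate point is the dimension-independent estimate inside the non-smooth bound: a naive $\|M\|_1 \le \sqrt{\rank M}\,\|M\|_2$ would introduce a spurious $\sqrt{\dim E}$, and eliminating it is precisely why one must pass to the conditional collision entropy $H_2(X|UE)$ with reference $\rho_{UE}$ (equivalently, the $\rho_{UE}^{-1/4}$-weighted Hilbert--Schmidt norm) rather than a bare norm. As this is standard, I would cite~\cite{Renner05} for the non-smooth bound and present only the smoothing reduction in detail; the bookkeeping of the additive $\eps$ is then routine, provided it is carried out with the sub-normalised state so as to match the event-based definition of smooth min-entropy used here.
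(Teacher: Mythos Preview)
The paper does not actually prove this theorem: it is stated in the preliminaries with a citation to~\cite{Renner05,DFRSS07} and used as a black box throughout. Your proposal correctly sketches the standard proof from those references --- the non-smooth quantum Leftover Hash Lemma via the $\rho_{UE}^{-1/4}$-weighted collision entropy, followed by a smoothing step matching the event-based definition of $\hmin^\eps$ used here --- so there is nothing to compare against and your plan is fine.
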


\section{The Noisy-Quantum-Storage Model} \label{sec:noisystorage} 
The \emph{noisy-quantum-storage model} has been established
in~\cite{WST08,STW09} for the special case where the dishonest
receiver is limited to so-called ``individual-storage attacks'', i.e.~he
treats every incoming pulse independently (akin to individual attacks
in QKD).

The most general setting considered here is exactly the one described
in detail in~\cite[Sections~1.3 and~3.3]{KWW09arxiv}, see
Figure~\ref{fig:noisystorage} for an illustration.  The cheating
receiver is computationally unbounded, has unlimited classical storage
and can perform perfect quantum operations. If the protocol instructs
parties to wait for time $\Delta t$, a dishonest player has to discard
all quantum information, except for what he can encode arbitrarily
into his (noisy) quantum storage. This storing process is formally
described by a completely positive and trace-preserving (CPTP) map
$\cF:\cB(\cH_{in})\rightarrow\cB(\cH_{out})$. 

\begin{figure}[t]
\begin{center}
\includegraphics{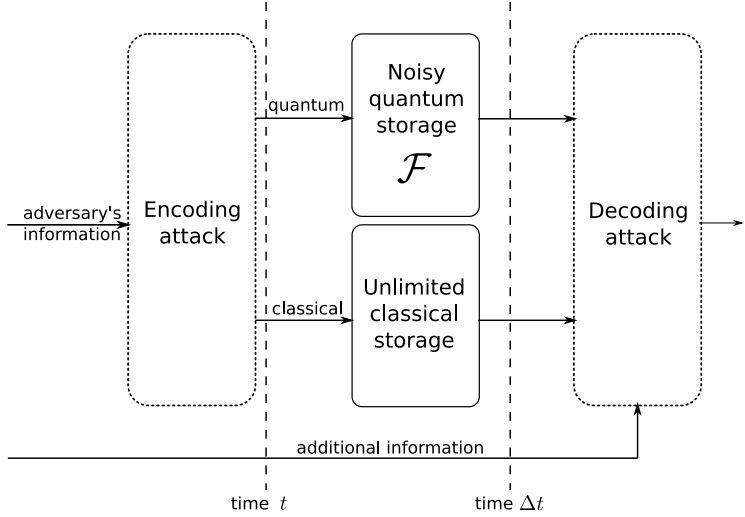}
\caption{(from~\cite{WCSL10}): During waiting times $\Delta t$, the
  adversary must use his noisy quantum storage described by the CPTP
  map $\cF$. Before using his quantum storage, he performs any
  (error-free) ``encoding attack'' of his choosing, which consists of
  a measurement or an encoding into an error-correcting code. After
  time $\Delta t$, he receives some additional information that he can
  use for decoding.} \label{fig:noisystorage}
\end{center}
\end{figure}

As in~\cite{KWW09arxiv}, let
\begin{align}
P_{succ}^{\cF}(n):=\max_{\{D_x\}_x,\{\rho_x\}_x}\frac{1}{2^n}\sum_{x\in
  \{0,1\}^n} \Tr(D_x\cF(\rho_x))\ \label{eq:succ}
\end{align}
be the maximal success probability of correctly decoding a randomly
chosen $n$-bit string $x\in\sbin^n$ sent over the quantum channel
$\cF$. Here, the maximum is over families of code states
$\{\rho_x\}_{x\in\sbin^n}$ on $\cH_{in}$ and decoding POVMs
$\{D_x\}_{x\in\sbin^n}$ on $\cH_{out}$.

Intuitively, if the quantum channel $\cF$ does not allow to transmit
enough classical information over it, we should be able to prove
security against a dishonest Bob with such a storage channel. Indeed,
the following two lemmas from~\cite{KWW09arxiv} formalize this
intuition and are the key ingredients to connect the security of
protocols in the \nsw\ for such channels with their ability to
transmit classical information.

\begin{lemma}[\cite{KWW09arxiv}]\label{lem:minentropygeneration}
Consider an arbitrary cq-state $\rho_{XQ}$ and
a CPTP map $\cF:\cB(\cH_Q)\rightarrow\cB(\cH_{out})$. Then,
 $\hmin(X|\cF(Q))\geq -\log P^{\cF}_{succ}(\lfloor \hmin(X)\rfloor)$.
\end{lemma}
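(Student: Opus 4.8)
Write $g := \guess(X|\cF(Q))$ and $n := \lfloor\hmin(X)\rfloor$; the claim is equivalent to $g\le P^{\cF}_{succ}(n)$. Since $\hmin(X)\ge n$ we have $P_X(x)\le 2^{-n}$ for every $x\in\setX$, and summing over $x$ forces $|\setX|\ge 2^n=:k$. Note that if $P_X$ happened to be uniform on a set of size exactly $k$, the bound would be immediate: relabelling that set by $\{0,1\}^n$, the quantity $g=\frac1{2^n}\sum_x\Tr(M_x\cF(\rho_Q^x))$ for the optimal guessing POVM $\{M_x\}$ is literally the success probability of a valid coding/decoding strategy in~\eqref{eq:succ}, hence at most $P^{\cF}_{succ}(n)$. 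The plan is to reduce the general case to this uniform case: I would decompose $P_X$ into uniform distributions over $k$-element sets, turn the optimal guessing POVM into one coding strategy per piece, and observe that the average of the resulting success probabilities equals $g$ while each of them is at most $P^{\cF}_{succ}(n)$.

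The combinatorial heart is the following decomposition lemma: any probability distribution $P_X$ on the finite set $\setX$ with $P_X(x)\le 2^{-n}$ for all $x$ can be written as a convex combination $P_X=\sum_j\lambda_j\,U_j$, where each $U_j$ is the uniform distribution on some $k$-element subset $S_j\subseteq\setX$. I would prove this by observing that $k\!\cdot\!P_X$ lies in the polytope $\{q\in[0,1]^{\setX}:\sum_x q_x=k\}$, whose vertices are precisely the $0/1$-indicator vectors of $k$-subsets: at a vertex at least $|\setX|-1$ of the box constraints are tight, pinning that many coordinates to $\{0,1\}$, and since $\sum_x q_x=k$ is an integer the remaining coordinate is integral too; so all coordinates lie in $\{0,1\}$ with exactly $k$ ones. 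Writing the point $k\!\cdot\!P_X$ of this polytope as a convex combination of its vertices and dividing by $k$ gives the decomposition.

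Given the decomposition, for each $j$ fix a bijection $\phi_j\colon S_j\to\{0,1\}^n$, and define code states $\rho^{(j)}_y:=\rho_Q^{\phi_j^{-1}(y)}$ on $\cH_Q$ together with a decoding POVM $\{D^{(j)}_y\}_{y\in\{0,1\}^n}$ on $\cH_{out}$ by $D^{(j)}_y:=M_{\phi_j^{-1}(y)}$, pushing the leftover operator $\id-\sum_{x\in S_j}M_x=\sum_{x\notin S_j}M_x$ into $D^{(j)}_{0^n}$ so that the $D^{(j)}_y$ sum to $\id$. By the definition of $P^{\cF}_{succ}$,
\[
  \frac1{2^n}\sum_{y\in\{0,1\}^n}\Tr\big(D^{(j)}_y\,\cF(\rho^{(j)}_y)\big)\ \le\ P^{\cF}_{succ}(n)\,,
\]
whereas discarding the leftover term and re-indexing by $x=\phi_j^{-1}(y)\in S_j$ shows the left-hand side is at least $\frac1{2^n}\sum_{x\in S_j}\Tr(M_x\cF(\rho_Q^x))=\sum_x U_j(x)\Tr(M_x\cF(\rho_Q^x))$. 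Taking the $\lambda_j$-weighted average over $j$ and using $\sum_j\lambda_j U_j=P_X$ gives $\sum_x P_X(x)\Tr(M_x\cF(\rho_Q^x))=g$ on one side, and a number $\le P^{\cF}_{succ}(n)$ on the other (a weighted average of quantities each bounded by $P^{\cF}_{succ}(n)$). Hence $g\le P^{\cF}_{succ}(n)$, i.e.\ $\hmin(X|\cF(Q))=-\log g\ge-\log P^{\cF}_{succ}(n)$.

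The only step needing genuine care is the decomposition lemma, and with it the reason for the floor in the statement: if $\hmin(X)$ is not an integer one cannot chop $P_X$ into atoms of mass $2^{-\hmin(X)}$ belonging to $2^{\hmin(X)}$ sets, but rounding down to $n=\lfloor\hmin(X)\rfloor$ makes every atom $\le 2^{-n}$, and the hypersimplex $\{q\in[0,1]^{\setX}:\sum_x q_x=2^n\}$ then has exactly the integral-vertex structure used above. Everything else is a routine unwinding of the two definitions of success probability, so I do not expect any further obstacle once the decomposition is stated cleanly.
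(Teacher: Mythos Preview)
The paper does not prove this lemma at all: it is merely quoted from~\cite{KWW09arxiv} and used as a black box, so there is no ``paper's own proof'' to compare against. Your argument is correct and self-contained. The hypersimplex decomposition---writing $k\cdot P_X$ with $k=2^n$ as a convex combination of indicators of $k$-subsets, then pulling back the optimal guessing POVM to a valid coding/decoding scheme on each piece---is a clean and standard way to reduce to the uniform case; the vertex characterisation you give is exactly the usual one, and the averaging step is sound because the same fixed POVM $\{M_x\}$ is used on every piece. The only cosmetic point is that the leftover mass could be dumped into any fixed outcome (not necessarily $0^n$), but that changes nothing.
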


\begin{lemma}[\cite{KWW09arxiv}]\label{lem:basicminentropyincrease}
  Consider an arbitrary ccq-state $\rho_{XTQ}$, and let $\varepsilon,
  \varepsilon'\geq 0$ be arbitrary. Let
  $\cF:\cB(\cH_Q)\rightarrow\cB(\cH_{Q_{out}})$ be an arbitrary CPTP map.
  Then,
\begin{align*}
\hmin^{\varepsilon+\varepsilon'}(X|T\cF(Q))\geq -\log P^{\cF}_{succ}\left(\big\lfloor \hmin^{\varepsilon}(X|T)-\log\frac{1}{\varepsilon'}\big\rfloor\right)\ .
\end{align*}
\end{lemma}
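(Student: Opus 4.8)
The plan is to peel the classical register $T$ off the side information and apply Lemma~\ref{lem:minentropygeneration} value-by-value in $T$, spending the extra smoothing budget $\eps'$ to discard the ``atypical'' values of $T$. Write $\alpha:=\hmin^\eps(X|T)$ and fix a smoothing event $\ev$ achieving it, so that $P[\ev]\geq 1-\eps$ and $\guess(X\ev|T)=\sum_t P_T(t)\max_x P_{X\ev|T}(x|t)\leq 2^{-\alpha}$. Because $\ev$ is an event on the classical $(X,T)$-marginal, conditioning $\rho_{XTQ}$ on $\ev$ and on a value $T=t$ leaves the $Q$-part of the state untouched; thus $\rho_{XQ|\ev,T=t}$ is a genuine cq-state, with $\hmin(X|\ev,T=t)=-\log\big(\max_x P_{X\ev|T}(x|t)/P_{\ev|T}(t)\big)$.

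Set $\beta:=\alpha-\log\frac{1}{\eps'}$ and let $\mathcal{T}_{\mathrm{bad}}:=\Set{t}{\hmin(X|\ev,T=t)<\beta}$ collect the values of $t$ on which the conditional state is ``too predictable''. The key estimate is that the probability of landing in $\mathcal{T}_{\mathrm{bad}}$ \emph{while} $\ev$ occurs is small: for $t\in\mathcal{T}_{\mathrm{bad}}$ one has $P_{\ev|T}(t)<2^{\beta}\max_x P_{X\ev|T}(x|t)$, hence $P[\ev\wedge T\in\mathcal{T}_{\mathrm{bad}}]=\sum_{t\in\mathcal{T}_{\mathrm{bad}}}P_T(t)P_{\ev|T}(t)<2^{\beta}\guess(X\ev|T)\leq 2^{\beta-\alpha}=\eps'$. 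Consequently the event $\ev_1:=\ev\wedge(T\notin\mathcal{T}_{\mathrm{bad}})$ satisfies $P[\overline{\ev_1}]\leq P[\bar\ev]+P[\ev\wedge T\in\mathcal{T}_{\mathrm{bad}}]\leq\eps+\eps'$, so $\ev_1$ is an admissible witness for $\hmin^{\eps+\eps'}(X|T\cF(Q))$ (formally: pass to the sub-normalised state $\rho_{XT\cF(Q)}|_{\ev_1}$, which lies in the $(\eps+\eps')$-smoothing ball).

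To finish, bound $\guess(X\ev_1|T\cF(Q))$. Since $T$ is classical, the optimal guesser reads $T$ and then measures $\cF(Q)$, so $\guess(X\ev_1|T\cF(Q))=\sum_t P_T(t)\,\guess(X\ev_1|\cF(Q),T=t)=\sum_t P[T=t\wedge\ev_1]\,\guess(X|\ev_1,\cF(Q),T=t)$, the last step just rewriting the sub-normalised guessing probability at fixed $t$ as $P_{\ev_1|T}(t)$ times the normalised one. Every $t$ occurring in the sum lies outside $\mathcal{T}_{\mathrm{bad}}$, so $\hmin(X|\ev_1,T=t)=\hmin(X|\ev,T=t)\geq\beta$; applying Lemma~\ref{lem:minentropygeneration} to the cq-state $\rho_{XQ|\ev_1,T=t}$ and the map $\cF$, together with the elementary fact that $n\mapsto P^{\cF}_{succ}(n)$ is non-increasing, yields $\guess(X|\ev_1,\cF(Q),T=t)\leq P^{\cF}_{succ}(\lfloor\beta\rfloor)$. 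Summing, $\guess(X\ev_1|T\cF(Q))\leq P^{\cF}_{succ}(\lfloor\beta\rfloor)\,P[\ev_1]\leq P^{\cF}_{succ}(\lfloor\beta\rfloor)$; taking $-\log$ and substituting $\beta=\hmin^\eps(X|T)-\log\frac{1}{\eps'}$ gives exactly the claim.

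The genuinely delicate point is the normalisation bookkeeping of the middle step: $\mathcal{T}_{\mathrm{bad}}$ must be defined through the \emph{normalised} conditional min-entropy $\hmin(X|\ev,T=t)$ so that Lemma~\ref{lem:minentropygeneration} is applicable there, yet the counting bound on $P[\ev\wedge T\in\mathcal{T}_{\mathrm{bad}}]$ has to come back to the unnormalised quantity $\guess(X\ev|T)$ that the hypothesis $\hmin^\eps(X|T)\geq\alpha$ controls; this succeeds only because the per-$t$ factors $P_{\ev|T}(t)$ cancel between the two places they appear. The remaining ingredients — the split of $\guess$ over a classical register, the fact that an event on the classical marginal does not disturb $Q$, and the monotonicity of $P^{\cF}_{succ}$ — are routine to verify and are not the crux.
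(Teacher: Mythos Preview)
The paper does not supply a proof of this lemma; it is imported verbatim from \cite{KWW09arxiv} and used as a black box throughout, so there is nothing in the present paper to compare your attempt against.

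That said, your argument is the standard one and is correct: pass from the averaged bound $\guess(X\ev|T)\leq 2^{-\alpha}$ to a pointwise bound $\hmin(X|\ev,T=t)\geq\alpha-\log(1/\eps')$ by a Markov-type estimate, spending an extra $\eps'$ of smoothing budget on the bad values of $t$; then apply Lemma~\ref{lem:minentropygeneration} value-by-value and sum. The normalisation bookkeeping you flag is handled correctly. The one point that genuinely needs care --- that the smoothing event $\ev$ may be taken to live on the classical $(X,T)$-marginal, so that conditioning on it leaves the per-$(x,t)$ quantum states $\rho_Q^{x,t}$ intact --- is justified because the paper's definition of $\hmin^\eps(X|T)$ involves only classical conditioning, and any such event can without loss of generality be realised as a function of $(X,T)$ together with independent randomness.
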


We are interested in channels~$\cN$ which satisfy the following
\emph{\scp}: The success probability~\eqref{eq:succ} decays
exponentially for rates $R$ above the capacity, i.e., it takes the
form
\begin{align}
P_{succ}^{\cN^{ \otimes n}}(nR)\leq 2^{-n\gamma^\cN(R)}\qquad\textrm{ where}\qquad \gamma^\cN(R)>0\textrm{ for all }R>C_\cN .\label{eq:strongconverseproperty} 
\end{align}
In~\cite{KW09}, property~\eqref{eq:strongconverseproperty} was shown
to hold for a large class of channels. An important example for which
we obtain security is the $d$-dimensional depolarizing
channel~$\cN_r:\cB(\mathbb{C}^d)\rightarrow\cB(\mathbb{C}^d)$ defined
for $d \geq 2$ as
\begin{align}
\cN_r(\rho) := r \rho + (1-r) \frac{\mathbbm{1}}{d}\ \qquad\textrm{ for some fixed } 
0 \leq r\leq 1\ , \label{eq:depolarizingchannel}
\end{align}
which replaces the input state $\rho$ with the completely mixed state
with probability~$1-r$. For $d=2$, having storage channel
$\cN_r^{\otimes n}$ means that the adversary can store $n$ qubits
which are affected by independent and identically distributed
noise. To see for which values of $r$ we can obtain security, we need
to consider the classical capacity of the depolarizing channel as
evaluated by King~\cite{King03}. For $d=2$, i.e., qubits, it is given
by
\begin{align*}
C_{\cN_{r}}=1+\frac{1+r}{2}\log\frac{1+r}{2}+\frac{1-r}{2}\log\frac{1-r}{2}\ . 
\end{align*}

\section{1-2 Oblivious Transfer} \label{sec:12OT}

\subsection{Security Definition and Protocol}

In this section we prove the security of a randomized version of 1-2
OT (Theorem \ref{thm:secure1}) from which we can easily obtain 1-2
OT. In such a randomized 1-2 OT protocol, Alice does not input two
strings herself, but instead receives two strings $S_0$, $S_1 \in
\01^\ell$ chosen uniformly at random. Randomized OT (ROT) can easily
be converted into OT. After the ROT protocol is completed, Alice uses
her strings $S_0,S_1$ obtained from ROT as one-time pads to encrypt
her original inputs $\hat{S_0}$ and $\hat{S_1}$, i.e.~she sends an
additional classical message consisting of $\hat{S_0} \oplus S_0$ and
$\hat{S_1} \oplus S_1$ to Bob. Bob can retrieve the message of his
choice by computing $S_C \oplus (\hat{S}_C \oplus S_C) =
\hat{S}_C$. He stays completely ignorant about the other message
$\hat{S}_{\ol{C}}$ since he is ignorant about $S_{\ol{C}}$. The
security of a quantum protocol implementing ROT is formally defined in
\cite{DFRSS07} and justified in~\cite{FS09} (see
also~\cite{WW08}).

\begin{definition} \label{def:ROT}
An $\eps$-secure 1-2 $\mbox{ROT}^\ell$ is a protocol between Alice
and Bob, where Bob has input $C \in \01$, and Alice has no input.
\begin{itemize}
\item (Correctness) If both parties are honest, then for any
  distribution of Bob's input $C$, Alice gets outputs $S_0,S_1 \in
  \01^\ell$ which are $\eps$-close to uniform and independent of $C$
  and Bob learns $Y = S_C$ except with probability $\eps$.
\item (Security against dishonest Alice) If Bob is honest and obtains output $Y$,
  then for any cheating strategy of Alice resulting in her state
  $\rho_A$, there exist random variables $S'_0$ and $S'_1$ such that
  $\Pr[Y=S'_C] \geq 1- \eps$ and $C$ is independent of $S'_0$,$S'_1$
  and $\rho_A$\footnote{Existence of the random variables
    $S'_0,S'_1$ has to be understood as follows: given the cq-state
    $\rho_{Y \! A}$ of honest Bob and dishonest Alice, there exists a
    cccq-state $\rho_{Y S'_0 S'_1 A}$ such that tracing out the
    registers of $S'_0,S'_1$ yields the original state $\rho_{Y A}$
    and the stated properties hold.}.
\item (Security against dishonest Bob) If Alice is honest, then for any cheating strategy of Bob resulting in his state $\rho_B$,
there exists a random variable $D \in \01$ such that $d(S_{\ol{D}}|S_{D}D\rho_B) \leq \eps$.
\end{itemize}
\end{definition}

We consider the same protocol for ROT as in~\cite{BBCS91,DFLSS09}. 

\begin{protocol}[\cite{BBCS91,DFLSS09}]1-2 $\mbox{ROT}^\ell$ \label{prot:nonoise}
\begin{enumerate}
\item Alice picks $x \in_R \01^n$ and $\theta \in_R \{+,\times\}^n$.
  At time $t=0$, she sends
  $\ket{x_1}_{\theta_1},\ldots,\ket{x_n}_{\theta_n}$ to Bob.
\item Bob picks $\hat{\theta} \in_R \{+,\times\}^n$ at random and
measures the $i$th qubit in the basis $\hat{\theta}_i$. 
He obtains outcome $\hat{x} \in \01^n$.
\item[]\hspace{-1cm}Both parties wait time $\Delta t$.
\item Alice sends the basis information $\theta=\theta_1,\ldots,\theta_n$ to Bob.
\item Bob, holding choice bit $c$, forms the sets $\setI_c = \{i\in
  [n] \mid \theta_i = \hat{\theta}_i\}$ and $\setI_{1-c} = \{i \in [n]
  \mid \theta_i \neq \hat{\theta}_i\}$. He sends $\setI_0,\setI_1$ to
  Alice.
\item Alice picks two hash functions $f_0,f_1 \in_R \setF$, where
  $\setF$ is a class of two-universal hash functions. She sends
  $f_0$,$f_1$ to Bob.  Alice outputs
  $s_0 = f_0(x|_{\setI_0})$ and $s_1 =
  f_1(x|_{\setI_1})$ \footnote{If $x|_{\setI_b}$ is less
    than $n$ bits long Alice pads the string $x|_{\setI_b}$ with 0's
    to get an $n$ bit-string in order to apply the hash function to
    $n$ bits.}.
\item Bob outputs $s_c = f_c(\hat{x}|_{\setI_c})$.
\end{enumerate}
\end{protocol}

\subsection{Security Analysis}

\paragraph{Correctness}
First of all, note that it is clear that the protocol fulfills its
task correctly.  Bob can determine the string $x|_{\setI_c}$ (except
with negligible probability $2^{-n}$ the set ${\cal I}_c$ is
non-empty) and hence obtains $s_c$.  Alice's outputs $s_0,s_1$
are perfectly independent of each other and of $c$.

\paragraph{Security against Dishonest Alice}
Security holds in the same way as shown in~\cite{DFRSS07}. Alice
cannot learn anything about Bob's choice bit from the index information
$\setI_0,\setI_1$ she receives, and Alice's input strings can be
extracted by letting her interact with an unbounded receiver.

\paragraph{Security against Dishonest Bob}

Proving that the protocol is secure against Bob requires more work.
Our goal is to show that there exists a $D \in \{0,1\}$ such that Bob
with noisy storage as described in Section~\ref{sec:noisystorage} is
completely ignorant about $S_{\ol{D}}$. Since we are
performing $1$-out-of-$2$ oblivious transfer of $\ell$-bit strings,
$\ell$ corresponds to the ``amount'' of oblivious transfer we can
perform for a given security parameter $\eps$ and number of qubits
$n$.

\begin{theorem} \label{thm:secure1} Fix $0<\delta<\frac14$ and let
\begin{align} \label{eq:epsilon}
  \eps = 2 \exp\left( -
    \frac{(\delta/4)^2}{32(2+\log\frac{4}{\delta})^2} \cdot n \right)
  \, .
\end{align}
Then, for any attack of a dishonest Bob with storage
$\cF:\cB(\cH_{in})\rightarrow\cB(\cH_{out})$,
Protocol~\ref{prot:nonoise} is $2\eps$-secure against a dishonest
receiver Bob according to Definition~\ref{def:ROT}, if $n \geq
4/\delta$ and
$$
\ell \leq -\frac12 \log P^{\cF}_{succ}\left( \left(\frac{1}{4} -
    \delta \right)n \right)- \log\left(\frac{1}{\eps}\right) \, .
$$ 
\end{theorem}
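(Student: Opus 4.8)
The plan is to prove the only part of Theorem~\ref{thm:secure1} not already dealt with in the ``Correctness'' and ``Security against Dishonest Alice'' paragraphs above, namely security against a dishonest Bob in the sense of Definition~\ref{def:ROT}: the existence of a bit $D\in\{0,1\}$ with $d(S_{\ol D}\mid S_D\,D\,\rho_B)\le 2\eps$. I would first purify Alice's preparation --- she keeps $n$ halves of EPR pairs and performs the basis-$\Theta$ measurement defining $X$ only at the end, which is observationally equivalent to Protocol~\ref{prot:nonoise}. A dishonest Bob then performs an arbitrary error-free encoding attack on the $n$ qubits, producing a classical register $K$ (kept for free) and a quantum register $Q$; during the waiting time the storage channel $\cF$ acts on $Q$, leaving $\cF(Q)$; afterwards Alice reveals $\Theta$, Bob announces a partition $(\setI_0,\setI_1)$ of $[n]$, and Alice outputs $S_b=f_b(X|_{\setI_b})$ with fresh two-universal $f_0,f_1$.

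The core is a four-step entropy chain. \emph{(i)} Conditioned on each value $k$ of $K$, Alice's register is in a fixed $n$-qubit state, so Theorem~\ref{thm:uncertainty} with parameter $\delta/4$ gives $\hmin^{\eps_1}(X\mid\Theta,K{=}k)\ge(\tfrac12-\tfrac\delta2)n$ with $\eps_1\le 2^{-\sigma(\delta/4)n}$; merging the smoothing events over $k$ yields $\hmin^{\eps_1}(X\mid\Theta K)\ge(\tfrac12-\tfrac\delta2)n$. \emph{(ii)} Once the partition is fixed, $(X|_{\setI_0},X|_{\setI_1})$ is a reordering of $X$, so the Min-Entropy-Splitting Lemma~\ref{lemma:ESLold} applied with $Z=(\Theta,K,\setI_0,\setI_1)$ yields, after relabelling the two halves, a bit $D$ with $\hmin^{\eps_1}\!\big(X|_{\setI_{\ol D}}\mid D,\Theta,K,\setI_0,\setI_1\big)\ge(\tfrac14-\tfrac\delta4)n-1\ge(\tfrac14-\delta)n$, using $n\ge4/\delta$. \emph{(iii)} Lemma~\ref{lem:basicminentropyincrease} with the above classical register as $T$, auxiliary parameter $\eps'=2^{-\sigma(\delta/4)n}$, and $P_{succ}^{\cF}$ monotone gives $\hmin^{\eps_1+\eps'}\!\big(X|_{\setI_{\ol D}}\mid D,\Theta,K,\setI_0,\setI_1,\cF(Q)\big)\ge -\log P_{succ}^{\cF}\big((\tfrac14-\delta)n\big)$, the floor being harmless since $\tfrac{3\delta}{4}n-\sigma(\delta/4)n\ge 1$ for $n\ge4/\delta$. \emph{(iv)} Treating $S_D$ as $\le\ell$ bits of leakage (costing $\ell$ by the chain rule~\eqref{eq:chain}), $(D,\Theta,K,\setI_0,\setI_1,S_D,f_D)$ as classical and $\cF(Q)$ as quantum side information, and $f_{\ol D}$ as the fresh extractor, Theorem~\ref{thm:PA} bounds
\[
d(S_{\ol D}\mid S_D\,D\,\rho_B)\ \le\ 2^{-\frac12\left(-\log P_{succ}^{\cF}((\frac14-\delta)n)-2\ell\right)-1}+(\eps_1+\eps')\,.
\]

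It then remains to read off constants. By~\eqref{eq:epsilon} and~\eqref{eq:sigma}, $\eps=2\cdot2^{-\sigma(\delta/4)n}$, so $\eps_1+\eps'\le 2\cdot2^{-\sigma(\delta/4)n}=\eps$; and the hypothesis $\ell\le-\tfrac12\log P_{succ}^{\cF}((\tfrac14-\delta)n)-\log(1/\eps)$ forces $-\log P_{succ}^{\cF}((\tfrac14-\delta)n)-2\ell\ge 2\log(1/\eps)$, so the first term is $\le 2^{-\log(1/\eps)-1}=\eps/2$. Hence $d(S_{\ol D}\mid S_D\,D\,\rho_B)\le\tfrac32\eps\le 2\eps$, which is the claim; the remaining verifications --- that $f_{\ol D}$ being chosen after $\setI$ lets Theorem~\ref{thm:PA} apply with $D$ in the side information, and that conditioning on the $\ell$-bit $S_D$ costs at most $\ell$ --- are routine.

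The step I expect to be the main obstacle is reconciling the order of events in~(ii)--(iii): $\cF$ must act \emph{before} Bob learns $\Theta$, whereas the partition $(\setI_0,\setI_1)$ --- and hence the string $X|_{\setI_{\ol D}}$ and the splitting bit $D$ --- is fixed only \emph{afterwards}, a priori via a measurement of $\cF(Q)$, so Lemma~\ref{lem:basicminentropyincrease} cannot be invoked naively (its $T$ and $X$ must live on the pre-$\cF$ state) and conditioning on $\setI$ in step~(ii) is not obviously free. The way around it is to condition throughout on Bob's classical choices --- on $K$ in~(i) and on the announced $(\setI_0,\setI_1)$ before the split --- so that, for each fixed value of these, $X|_{\setI_{\ol D}}$ is a genuine pre-$\cF$ random variable, Lemma~\ref{lem:basicminentropyincrease} applies to a bona fide ccq-state, and the bounds are averaged back; pushing this bookkeeping through while keeping every smoothing error exponentially small and landing precisely on~\eqref{eq:epsilon} is the bulk of the work.
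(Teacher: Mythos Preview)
Your four-step chain (uncertainty relation $\to$ min-entropy splitting $\to$ Lemma~\ref{lem:basicminentropyincrease} $\to$ privacy amplification with the chain rule for $S_D$) is exactly the paper's proof, with the same constants: the paper also takes smoothing parameter $\eps/2$ at the uncertainty step, splits with $Z=(\Theta,K)$ to obtain $D$, invokes Lemma~\ref{lem:basicminentropyincrease} with $\eps'=\eps/2$ and the monotonicity of $P^{\cF}_{succ}$ to absorb the $-1$ and $\log(2/\eps)$ into the $\frac{3\delta}{4}n$ slack, loses $\ell$ bits for $S_D$ via the chain rule, and reads off the $2\eps$ bound from privacy amplification. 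Your numerics land on $\tfrac32\eps\le 2\eps$, which is slightly sharper than the paper's $\eps+\eps=2\eps$, but otherwise identical.

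The ordering issue you raise in your last paragraph is real, and it is precisely the point the paper does \emph{not} address: the paper simply writes $\hmin^{\eps/2}(X_0X_1\mid\Theta K)$ and splits with side information $(\Theta,K)$, tacitly treating the partition $(\setI_0,\setI_1)$ --- and hence $X_0,X_1,D$ --- as functions of pre-channel classical data, so that $\rho_{X_{\bar D}\,(D\Theta K)\,Q}$ is a genuine ccq-state to which Lemma~\ref{lem:basicminentropyincrease} applies. Your proposed fix (condition on the announced $\setI$ and average back) is the natural instinct, but as you yourself note it is not quite sound as stated: if $\setI$ is obtained by measuring $\cF(Q)$, conditioning on it does not leave a well-defined pre-$\cF$ state on which Lemma~\ref{lem:basicminentropyincrease} can act, and the min-entropy $\hmin(X\mid\Theta K\setI)$ need not inherit the uncertainty-relation bound. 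The paper resolves this no more carefully than you do; both arguments are at the same level of informality on this point. The standard way to close the gap is to argue that without loss of generality Bob's announced partition depends only on $(K,\Theta)$ and fresh private randomness (any coherence spent measuring $Q_{out}$ to choose $\setI$ can only hurt Bob's later decoding and does not enlarge the class of achievable joint distributions of $(S_{\bar D},D,S_D,\rho_B)$), but neither the paper nor your proposal spells this out.
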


\begin{proof}
  We need to show the existence of a binary random variable $D$ such
  that $S_{\bar{D}}$ is $\eps$-close to uniform from Bob's point of
  view. 

We can argue as in the proof of the security of weak string erasure
for honest Alice (Section~3.3 in~\cite{KWW09arxiv}) that 
\[ \hmine{\eps/2}{X_0 X_1 | \Theta K} \geq \frac{n}{2} - \frac{n
  \delta}{2} \, ,
\]
where $K$ denotes Bob's classical information obtained from the encoding attack.
Classical min-entropy splitting (Lemma~\ref{lemma:ESLold}) then
ensures that there exists a binary random variable $D \in \set{0,1}$
such that
\[ \hmine{\eps/2}{X_{\ol{D}} |D \Theta K } \geq \frac{n}{4} - \frac{n
  \delta}{4} -1 \, .
\]

One can now continue to argue as in the proof of Theorem 3.3
in~\cite{KWW09arxiv}, i.e.~we use Lemma~\ref{lem:basicminentropyincrease}
to get
\begin{align*}
  \hmine{\eps}{X_{\ol{D}} | D \Theta K Q_{out}} \geq -\log
  P^{\cF}_{succ}\left( \frac{n}{4} - \frac{n \delta}{4} -1 -\log
    \frac{2}{\eps} \right)
  \geq  -\log
  P^{\cF}_{succ}\left( \left(\frac{1}{4} - \delta \right)n \right) \, ,
\end{align*}
where the last step follows in the same way as in~\cite{KWW09arxiv} from
the monotinicity of the success probability $P^{\cF}_{succ}(m) \leq
P^{\cF}_{succ}(m')$ for $m \geq m'$ and the fact that $\log \frac{2}{\eps}
\leq \frac{\delta}{2} n \leq \frac{3\delta}{4}n
-1$. 

The rest of the security proof is analogous to the proof
in~\cite{DFRSS07}: It follows from the chain rule
for smooth min-entropy~\eqref{eq:chain} that
\begin{align*}
\hmine{\eps}{X_{\ol{D}}|D \Theta S_D K Q_{out}}
&\geq \hmine{\eps}{X_{\ol{D}} S_D|D \Theta K Q_{out}} - \ell \\
&\geq -\log
  P^{\cF}_{succ}\left( \left(\frac{1}{4} - \delta \right)n \right) - \ell.
\end{align*}
The privacy amplification Theorem~\ref{thm:PA} yields
\begin{equation}\label{eq:explicitTradeoff}
  d(F_{\ol{D}}(X_{\ol{D}}) \mid D \Theta F_D S_D K Q_{out}) \leq
  2^{-\frac12( -\log P^{\cF}_{succ}\left( \left(\frac{1}{4} - \delta \right)n \right) - 2\ell) }+ \eps
  \,
\end{equation}
which is smaller than $2 \eps$ as long as
$$
-\frac12 \log P^{\cF}_{succ}\left( \left(\frac{1}{4} - \delta \right)n \right)-
\ell \geq \log\left(\frac{1}{\eps}\right) \, .
$$
from which our claim follows.
\end{proof}

\subsection{Tensor-product channels}

\begin{corollary} \label{cor:tensorproductOT}
Let Bob's storage be described by $\cF = \cN^{\otimes \nu n}$ with
$\nu >0$, where $\cN$ satisfies the strong-converse
property~\eqref{eq:strongconverseproperty}, and 
$$ C_{\cN} \cdot \nu < \frac14 \, .$$
Fix $\delta \in ]0,\frac14 - C_{\cN} \cdot \nu[$, and let $\eps$ be
defined as in~\eqref{eq:epsilon}. Then, for any attack of a dishonest Bob,
Protocol~\ref{prot:nonoise} is $2\eps$-secure against a dishonest
receiver Bob according to Definition~\ref{def:ROT}, if $n \geq
4/\delta$ and
$$
\ell \leq  \gamma^{\cN}\left(\frac{1/4-\delta}{\nu}\right) \cdot
\frac{\nu n }{2} - \log \left(\frac{1}{\eps} \right) \, .
$$
\end{corollary}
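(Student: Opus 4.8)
The plan is to obtain this corollary as an immediate specialization of Theorem~\ref{thm:secure1}: all the work in going from weak string erasure through min-entropy splitting, Lemma~\ref{lem:basicminentropyincrease}, the chain rule and privacy amplification has already been carried out there, so it remains only to evaluate the bound $\ell \leq -\tfrac12 \log P^{\cF}_{succ}\big((\tfrac14-\delta)n\big) - \log(1/\eps)$ for the concrete channel $\cF = \cN^{\otimes \nu n}$ using the strong-converse property~\eqref{eq:strongconverseproperty}.

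First I would rewrite the argument of $P^{\cF}_{succ}$ in the rate form demanded by~\eqref{eq:strongconverseproperty}: the storage channel consists of $\nu n$ uses of $\cN$, and decoding $(\tfrac14-\delta)n$ bits over it corresponds to the rate $R := \tfrac{(1/4-\delta)n}{\nu n} = \tfrac{1/4-\delta}{\nu}$ per channel use. The hypothesis $\delta \in\, ]0, \tfrac14 - C_{\cN}\nu[$ (a nonempty interval precisely because of the standing assumption $C_{\cN}\cdot\nu < \tfrac14$) is exactly what ensures $R > C_{\cN}$, hence $\gamma^{\cN}(R) > 0$ and~\eqref{eq:strongconverseproperty} applies. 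It gives $P_{succ}^{\cN^{\otimes \nu n}}\big((\nu n)R\big) \leq 2^{-\nu n\,\gamma^{\cN}(R)}$, and taking $-\tfrac12\log$ of both sides yields
$$-\tfrac12 \log P^{\cF}_{succ}\Big(\big(\tfrac14-\delta\big)n\Big) \;\geq\; \gamma^{\cN}\!\left(\tfrac{1/4-\delta}{\nu}\right)\cdot\tfrac{\nu n}{2}\,.$$

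Consequently, whenever $\ell$ obeys the corollary's bound $\ell \leq \gamma^{\cN}\big(\tfrac{1/4-\delta}{\nu}\big)\tfrac{\nu n}{2} - \log(1/\eps)$, it also obeys the hypothesis of Theorem~\ref{thm:secure1}, so that $2\eps$-security follows together with the condition $n \geq 4/\delta$ carried over verbatim. I do not expect any genuine obstacle here; the only points needing care are the bookkeeping of the factor $\nu$ when converting ``number of decoded bits'' into ``rate per channel use'', the (harmless) integrality of $\nu n$, which if necessary is handled by replacing it with $\lfloor \nu n\rfloor$ and invoking monotonicity of the success probability, and checking that the chosen $\delta$ keeps us in the regime $R > C_{\cN}$ where $\gamma^{\cN}$ is strictly positive.
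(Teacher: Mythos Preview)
Your proposal is correct and follows essentially the same route as the paper: both derive the corollary as an immediate specialization of Theorem~\ref{thm:secure1} by plugging $\cF=\cN^{\otimes \nu n}$ into the strong-converse bound~\eqref{eq:strongconverseproperty}, rewriting $(\tfrac14-\delta)n$ as $(\nu n)\cdot\tfrac{1/4-\delta}{\nu}$ to identify the per-use rate, and reading off $-\tfrac12\log P^{\cF}_{succ}\big((\tfrac14-\delta)n\big)\geq \tfrac{\nu n}{2}\,\gamma^{\cN}\!\big(\tfrac{1/4-\delta}{\nu}\big)$. Your additional remarks on the integrality of $\nu n$ and the check $R>C_{\cN}$ are not spelled out in the paper but are the right sanity checks.
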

\begin{proof}
We can substitute $n$ by $\nu n$ and $R$ by $R/\nu$ in the
strong-converse property~\eqref{eq:strongconverseproperty} to obtain
\begin{align*}
-\frac{1}{n}\log P^{\cN^{\otimes \nu n}}_{succ}(nR)\geq \nu \cdot \gamma^{\cN}(R/\nu)\ .
\end{align*}
The claim then follows from Theorem~\ref{thm:secure1} by setting~$R:=\frac{1}{4}-\delta$.
\end{proof}

For the $d$-dimensional depolarizing channel
\begin{equation} \label{eq:depolChannel}
\mN_r(\rho) = r \rho + (1-r) \frac{\mathbbm{1}}{d}.
\end{equation}
which preserves a $d$-dimensional input state with probability $r$ and
depolarizes it completely with probability $1-r$, it has been shown
in~\cite{KW09,KWW09arxiv} that
$$
\gamma^{\cN}(R) = \max_{\alpha \geq 1} \frac{\alpha-1}{\alpha}
\left(R - \log d + \frac{1}{1-\alpha} \log \left(\left(r + \frac{1-r}{d}\right)^\alpha 
+ (d-1)\left(\frac{1-r}{d}\right)^\alpha \right) \right)\ .
$$

We compare the parameters in terms of OT- and error-rate of our
approach to the ones in~\cite{KWW09arxiv}. In
Figure~\ref{fig:nurpossible}, the regions of the noise-parameter $r$
and storage-rate $\nu$ from our approach (red) and the
\cite{KWW09arxiv}-approach (blue) are shown. As the information
rate after min-entropy splitting in our approach is lower than without
min-entropy splitting, the range of noisy storage channels for which
security can theoretically be shown is smaller in our
approach. However, we will see in the following that the error
overhead due to the complicated post-processing with interactive
hashing in~\cite{KWW09arxiv} nullifies that advantage again.

\begin{figure}[t]
\begin{center}
\scalebox{1}{
\begin{pspicture}(0,0)(8.0,8.0)
\psset{unit=.7cm}
\psset{linewidth=.8pt}
\psset{labelsep=2.5pt}
\put(0,0){\epsfig{file=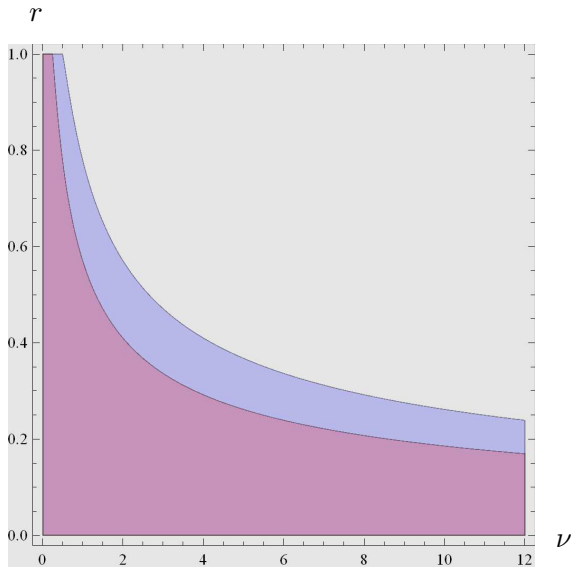,width=7cm}}
\put(0.4,10.4){$r$}
\put(10.4,0.4){$\nu$}
\end{pspicture}
}
\caption{Possible regions of a depolarizing qubit channel with noise
  parameter $r$ and storage rate $\nu$ where security for OT can be
  established for asymptotically many pulses. The
  \cite{KWW09arxiv}-approach yields the blue region, whereas our simpler approach gives the red subset of it.
  \label{fig:nurpossible}}
\end{center}
\end{figure}

We investigate two scenarios, in both of which we are ready to accept
a security error of at most $10^{-8}$.  In the first scenario, we are
given $n=10^{10}$ pulses to work with against an adversary with
depolarizing qubit channel ($d=2$) with noise rate $r$ and storage
rate $\nu=1$. In our approach, according to
Corollary~\ref{cor:tensorproductOT}, the security error is $2\eps$
where $\eps$ is defined in~\eqref{eq:epsilon}, thus for $n=10^{10}$,
we can choose $\delta=0.0106$ to have the error small enough. The
resulting OT-rate $\ell/n$ is the red line in Figure~\ref{fig:1010}
for different noise rates $r$ and a storage rate of $\nu=1$.  In the
approach of~\cite{KWW09arxiv}, the security error is harder to control
as it also depends on other parameters such as the noise rate $r$ and
a new parameter $\omega$. In order to keep it below the required
$10^{-8}$, we choose $\delta=0.011$ and $\omega=2$. The resulting
OT-rate is plotted as blue dashed line in Figure~\ref{fig:1010}. Note
that this amount of pulses are not sufficient to keep the security
error below $10^{-8}$ for noise rates $r$ above $0.21$.

In Figure~\ref{fig:1015}, we investigate the same setting but with
many more pulses, namely $n=10^{15}$. With that many pulses, the error
is better to control in the~\cite{KWW09arxiv}-approach and leads to
higher OT-rates compared to our approach for noise parameters between
$0.34 < r < 0.52$. In all other cases, our simpler approach allows to
get OT of longer strings while keeping the security error below
$10^{-8}$. 

To put these numbers of pulses into perspective, one can think of a
weak-coherent pulse setup which runs at 1GHz and emits a single
photons with Poisson distribution with parameter $\mu=1$, i.e.~with
probability $e^{-\mu}\mu  \approx 0.3679$ per pulse. Hence, we have to wait
approximately 27~seconds to obtain $n=10^{10}$ single
pulses, whereas it takes $10^6 \cdot e$ seconds, i.e.~roughly 30 days to generate
$n=10^{15}$ single pulses.

\begin{figure}[t]
\begin{minipage}[t]{0.45\textwidth}
\begin{center}
\scalebox{0.7}{
\begin{pspicture}(0,0)(12.0,5.0)
\psset{unit=.7cm}
\psset{linewidth=.8pt}
\psset{labelsep=2.5pt}
\put(0,0){\epsfig{file=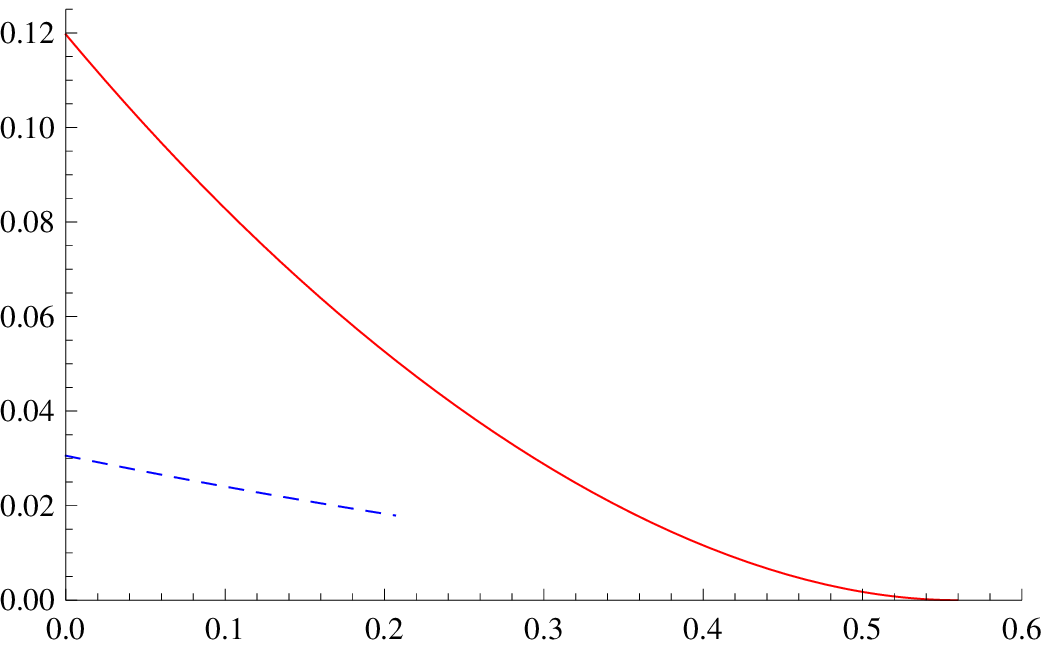,width=10cm}}
\put(0.4,9.2){$\ell/n$}
\put(14.4,0.4){$r$}
\end{pspicture}
}
\caption{The adversary's storage is depolarizing qubit noise $\cF=\cN_r^{\otimes
    n}$ with $d=2$, $\nu=1$, and $n=10^{10}$. 
The horizontal axis represents the noise parameter $r$, 
while the vertical axis represents the OT-rate $\ell/n$. The rates are
only plotted for regions where the security error stays below $10^{-8}$.
The red line represents the OT-rate obtained from our approach 
(Corollary~\ref{cor:tensorproductOT} with $\delta=0.0106$). The dashed
blue line is the rate from the \cite{KWW09arxiv}-approach with
optimised extra parameters $\delta=0.011$ and $\omega=2$. For
$r>0.21$, the security error is above the allowed threshold
$10^{-8}$. For this many pulses, our approach provides a higher
OT-rate for all possible noise parameters $r$ while keeping the security error reasonably low.
\label{fig:1010}}
\end{center}
\end{minipage}\hspace{5mm}
\begin{minipage}[t]{0.45\textwidth}
\begin{center}
\scalebox{0.7}{
\begin{pspicture}(0,0)(12.0,5.0)
\psset{unit=.7cm}
\psset{linewidth=.8pt}
\psset{labelsep=2.5pt}
\put(0,0){\epsfig{file=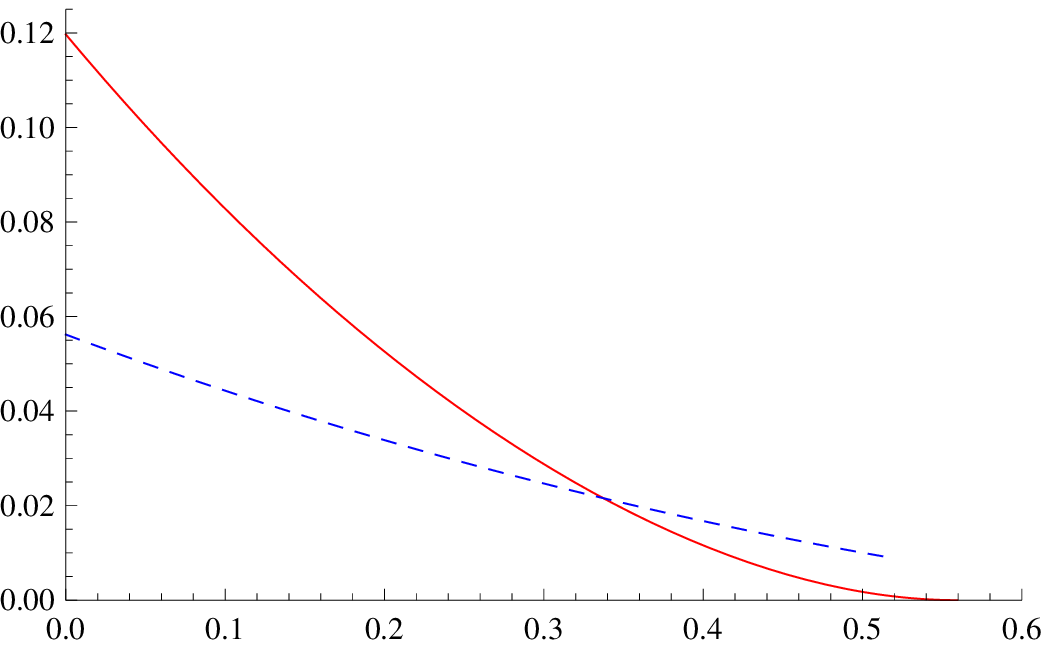,width=10cm}}
\put(0.4,9.2){$\ell/n$}
\put(14.4,0.4){$r$}
\end{pspicture}
}
\caption{As in Figure~\ref{fig:1010}, but for many more pulses, namely
  $n=10^{15}$.  The red line represents the OT-rate obtained from our
  approach (Corollary~\ref{cor:tensorproductOT} with
  $\delta=0.000057588$). The dashed blue line is the rate from the
  \cite{KWW09arxiv}-approach with optimised extra parameters
  $\delta=0.0005$ and $\omega=10$. For $r>0.47$, the security error is
  above the allowed threshold $10^{-8}$. For noise parameters between
$0.34 < r < 0.52$, the \cite{KWW09arxiv}-approach yields higher
OT-rates. For all other noise rates $r$, our simpler approach yields
higher rates.
\label{fig:1015}}
\end{center}
\end{minipage}
\end{figure}

\section{Robust Oblivious Transfer}
\label{sec:robust} 
In a practical setting, imperfections in Alice's and Bob's apparatus
as well as in the communication channel manifest themselves in form of
erasures and bit-flip errors. This setting has been analyzed for individual
attacks in~\cite{STW09} and for general attacks
in~\cite{WCSL10}. 
In the following, we present an upgraded protocol for oblivious
transfer along the lines of~\cite{WCSL10} but with a much
simpler and natural post-processing. 

\subsection{Protocol}
We consider the same setup as in~\cite{WCSL10}. Before engaging
in the actual protocol, Alice and Bob agree on a security-error
probability $\eps>0$. The parameter $p^h_\lostB$ denotes the
probability that an honest Bob observes no click in his detection
apparatus and the corresponding parameter $\zeta^h_\lostB$ says how
much fluctuations we allow. Typically, we use a $\zeta^h_\lostB$ of
order $\sqrt{\ln(2/\eps)/(2n)}$ such that the Chernoff bound allows us
to argue that $p^h_\lostB$ lies in the interval
$[(p^h_{\lostB}-\zeta^h_\lostB)n,(p^h_{\lostB}+\zeta^h_\lostB)n]$
except with probability $\eps$.

Error-correction is done using a one-way (forward) error correction
scheme, e.g.~by using low-density parity-check (LDPC) codes. The
players agree on a linear code which can correct errors in a $k$-bit
string by announcing the syndrome of the string. If each bit of the
string is flipped independently with probability $p^h_{\errB}$, this procedure
amounts to sending error-correcting information of at most $1.2 \cdot
h(p^h_{\errB}) \cdot k$ bits~\cite{ELAB09}.

We assume that the players have synchronized clocks. In each time
slot, Alice sends one qubit to Bob.

\begin{protocol}Robust 1-2 $\mbox{ROT}^\ell(C,T,\eps)$ \label{prot:practical}
\begin{enumerate}
\item Alice picks $x \in_R \01^n$ and $\theta \in_R \{+,\times\}^n$
  uniformly at random.
\item Bob picks $\hat{\theta} \in_R \{+,\times\}^n$ uniformly at
  random.
\item For $i=1,\ldots,n$: In time slot $t=i$, Alice sends bit $x_i$
  encoded in basis $\theta_i$ to Bob.  

In each time slot, Bob measures
  the incoming qubit in basis $\hat{\theta}_i$ and records
  whether he detects a photon or not. He obtains some bit-string $\hat{x}
  \in \01^m$ with $m \leq n$.
\item Bob reports back to Alice in which time slots he recorded a click.
\item \label{step:alicecheck} Alice restricts herself to the set of $m
  < n$ bits that Bob did not report as missing. Let this set of qubits
  be $S_{\rm remain}$ with $|S_{\rm remain}|=m$. If $m$ does not lie
  in the interval
  $[(1-p^h_{\lostB}-\zeta^h_\lostB)n,(1-p^h_{\lostB}+\zeta^h_\lostB)n]$,
  then Alice aborts the protocol. 
\item[]\hspace{-1cm}Both parties wait time $\Delta t$.
\item Alice sends the basis information
  $\theta=\theta_1,\ldots,\theta_m$ of the remaining positions to Bob.
\item \label{step:firstuseofchoicebit} Bob, holding choice bit $c$, forms the sets $\setI_c = \{i\in [m] \mid \theta_i = \hat{\theta}_i\}$ and $\setI_{1-c} = \{i \in [m]
  \mid \theta_i \neq \hat{\theta}_i\}$. He sends $\setI_0, \setI_1$ to
  Alice.
\item Alice picks two two-universal hash functions $f_0,f_1 \in_R
  \setF$ and sends $f_0$,$f_1$ and the syndromes $\syn(x|_{\setI_0})$
  and $\syn(x|_{\setI_1})$ to Bob.  Alice outputs $s_0 =
  f_0(x|_{\setI_0})$ and $s_{1} = f_1(x|_{\setI_1})$.
\item Bob uses $\syn(x|_{\setI_c})$ to correct the errors on his
  output $\hat{x}|_{\setI_c}$. He obtains the corrected bit-string $x_{\rm
    cor}$ and outputs $s'_c = f_c(x_{\rm cor})$.
\end{enumerate}
\end{protocol}

\subsection{Security Analysis}

\paragraph{Correctness} If both players are honest, Bob reports back
enough rounds to Alice. Therefore, in Step~\ref{step:alicecheck} the
protocol is aborted with probability at most $\eps$. The
error-correcting codes are chosen such that Bob can decode except with
probability $\eps$.  These facts imply that if both parties are
honest, the protocol is correct except with probability $2 \eps$. 

\paragraph{Security against Dishonest Alice}
Even though in this scenario Bob {\em does} communicate to Alice, the
information about which qubits were erased is independent of Bob's
choice bit $c$ as this bit is only used in
Step~\ref{step:firstuseofchoicebit}. Hence Alice does not learn
anything about his choice bit $c$. Her input strings can be extracted
as in the analysis of Protocol~\ref{prot:practical}.

\paragraph{Security against Dishonest Bob}
In the previous Section~\ref{sec:12OT}, we have seen that the security
analysis for weak string erasure from~\cite{KWW09arxiv} essentially
carries over to 1-2 oblivious transfer. Similarly, the security
analysis for weak string erasure with errors from~\cite{WCSL10}
can be adapted to analyse Protocol~\ref{prot:nonoise}.

We will use the following probabilities: (see~\cite{WCSL10} for
details and some example parameters for concrete setups)
\begin{center}
\begin{tabular}{|l|l|}
\hline
$p^d_\lostB$ & dishonest Bob observes no click in his detection
apparatus\\
& (due to imperfections in Alice's apparatus)\\[2pt]
\hline
$p^h_\lostB$ & honest Bob observes no click in his detection
apparatus\\
& (due to losses and imperfections of both player's apparatus)\\[2pt]
\hline
$p^1_\sent$ & Alice sends exactly $1$ photon.\\[2pt]
\hline
$p^h_{\errB}$ & honest Bob outputs the wrong bit\\
&(due to misalignments and noise on the channel)\\[2pt]
\hline
\end{tabular}
\end{center}

\begin{theorem}[Security against dishonest Bob]\label{thm:robustOT}
Fix $0<\delta<\frac14$ and let
\begin{align} \label{eq:epsilon}
  \eps = 2 \exp\left( -
    \frac{(\delta/4)^2}{32(2+\log\frac{4}{\delta})^2} \cdot m^1 \right)
  \, .
\end{align}
Then, for any attack of a dishonest Bob with storage
$\cF:\cB(\cH_{in})\rightarrow\cB(\cH_{out})$,
Protocol~\ref{prot:practical} is $2\eps$-secure against a dishonest
receiver Bob according to Definition~\ref{def:ROT}, if $m^1 \geq
4/\delta$ and the length of the OT-strings
$$
\ell \leq -\frac12 \log P^{\cF}_{succ}\left( \left(\frac{1}{4} -
    \delta \right)m^1 \right)- 1.2 \cdot h(p^h_{\errB}) \cdot \frac{m}{2} - 
\log\left(\frac{1}{\eps}\right) \, ,
$$ 
where $m^1 \assign (p^1_\sent - p^h_\lostB + p^d_\lostB) n$ is the minimal number of
single-photon rounds remaining and $m = (1-p^h_{\lostB}) n$ is the total number of rounds remaining.

\end{theorem}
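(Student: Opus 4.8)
The plan is to retrace the proof of Theorem~\ref{thm:secure1} almost verbatim, inserting two extra pieces of bookkeeping: one at the start to handle lossy detection and multi-photon emissions, and one at the end to handle the error-correction syndromes that Alice now transmits.

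First I would establish the initial smooth-min-entropy bound, but this time only over the rounds in which Alice provably sent a single photon. The subtlety is that a dishonest Bob himself announces which time slots ``clicked'', so he will try to discard exactly those single-photon rounds on which Alice's uncertainty relation hurts him. Following the weak-string-erasure-with-errors analysis of~\cite{WCSL10}: a Chernoff bound (this is where the slack $\zeta^h_\lostB$ of the protocol and the abort test in Step~\ref{step:alicecheck} enter) guarantees, except with probability $\eps$, that Alice emits a single photon in at least $p^1_\sent n$ rounds and that Bob is forced to keep at least $m=(1-p^h_\lostB)n$ rounds overall, of which only $p^d_\lostB n$ of the discarded ones are ``free'' source-side losses; hence Bob must retain at least $m^1=(p^1_\sent-p^h_\lostB+p^d_\lostB)n$ genuine single-photon rounds. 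On those $m^1$ rounds, Theorem~\ref{thm:uncertainty}, applied with smoothing parameter $\delta/4$ so that $\eps$ is exactly the quantity in~\eqref{eq:epsilon}, yields $\hmine{\eps/2}{X_0 X_1 | \Theta K} \geq (\tfrac12-\tfrac\delta2)m^1$, where $X_0,X_1$ are the substrings of $x$ on the single-photon positions in $\setI_0$ and $\setI_1$ and $K$ is Bob's classical encoding-attack output; the remaining rounds (multi-photon, or declared lost) are simply handed to the adversary, which only weakens the register being conditioned on.

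From here everything is the argument of Theorem~\ref{thm:secure1} with $n$ replaced by $m^1$. Min-entropy splitting (Lemma~\ref{lemma:ESLold}) produces a bit $D$ with $\hmine{\eps/2}{X_{\ol D}|D\Theta K}\geq \tfrac{m^1}{4}-\tfrac{\delta m^1}{4}-1$; Lemma~\ref{lem:basicminentropyincrease} then absorbs the noisy storage channel, giving $\hmine{\eps}{X_{\ol D}|D\Theta K Q_{out}}\geq -\log P^{\cF}_{succ}\big((\tfrac14-\delta)m^1\big)$, where, as in Theorem~\ref{thm:secure1}, monotonicity of $P^{\cF}_{succ}$ together with $\log\tfrac2\eps\leq\tfrac\delta2 m^1$ (valid once $m^1\geq 4/\delta$, by the explicit form of $\eps$) soaks up the $-1-\log\tfrac2\eps$ slack. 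The genuinely new step is applying the chain rule~\eqref{eq:chain} to the extra classical traffic of the robust protocol: the hash functions are independent and cost nothing, the string $S_D$ costs $\ell$ bits, and Bob now additionally sees the error-correction syndromes $\syn(x|_{\setI_0}),\syn(x|_{\setI_1})$, whose total length is bounded, via the one-way-code estimate of~\cite{ELAB09} (and the accounting of~\cite{WCSL10} on the surviving $m=(1-p^h_\lostB)n$ positions), by the $1.2\,h(p^h_\errB)\cdot\tfrac m2$ term in the statement. Subtracting these from the previous bound and feeding the result into the privacy-amplification Theorem~\ref{thm:PA} turns $\hmine{\eps}{X_{\ol D}|D\Theta S_D\,\syn\,K Q_{out}}$ into $d\big(F_{\ol D}(X_{\ol D})\mid\cdots\big)\leq 2\eps$ precisely under the stated bound on $\ell$.

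I expect the one real obstacle to be the first step: correctly lower-bounding the worst-case number $m^1$ of unavoidable single-photon rounds against a fully adaptive dishonest Bob who fixes his click-report only after his quantum interaction, and verifying that conditioning away all the ``bad'' rounds genuinely leaves the claimed smooth min-entropy intact --- one must combine the smoothing events coming from the uncertainty relation, from the photon-count Chernoff bound, and from min-entropy splitting so that the cumulative error stays $O(\eps)$. Everything downstream is a mechanical transcription of the $n$-qubit argument of Theorem~\ref{thm:secure1}, carrying the extra syndrome term along.
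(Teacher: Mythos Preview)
Your proposal is correct and follows essentially the same route as the paper's proof: bound $m^1$ via the loss-reporting argument (deferred to~\cite{WCSL10}), invoke the uncertainty relation on those $m^1$ rounds, apply min-entropy splitting and Lemma~\ref{lem:basicminentropyincrease} exactly as in Theorem~\ref{thm:secure1} with $n\mapsto m^1$, then use the chain rule to subtract $\ell$ and the syndrome leakage before privacy amplification. One small accounting slip: the \emph{total} syndrome length is $1.2\,h(p^h_\errB)\cdot m$ (both syndromes together on all $m$ surviving positions), and the factor $\tfrac m2$ in the theorem statement arises only after the $\tfrac12$ in the privacy-amplification exponent, not from the syndrome length itself.
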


\begin{proof}
As in~\cite{WCSL10}, we adopt the conservative viewpoint that a dishonest Bob does not
experience any bit-errors nor losses on the channel. Furthermore, we assume that
a dishonest receiver can detect when multiple photons arrive and
extract the encoded bit without knowledge of the encoding basis. These
multi-photon rounds will thus not contribute to the uncertainty of a
dishonest Bob. He will also not keep any quantum information about
these bits.

The main complication in this more practical scenario is that a
dishonest Bob might falsely report back rounds as missing in order to
decrease the overall fraction of single-photon rounds where he has
uncertainty about the encoded bits. 

Let $p^h_{\lostB}$ be the probability that honest Bob does not
register a click (due to losses in the channel and imperfect apparatus
of both players). On the other hand, let $p^d_{\lostB}$ be the
probability that a dishonest Bob does not register a click (due to
imperfections in Alice's apparatus). We assume that a dishonest Bob
will always report a round as missing if he did not register a click
(because there is no advantage for him not doing so). We also assumed
that Bob gets full information when more than one photon was sent and
hence, he will not report these rounds as missing. We conclude that
out of the $n$ rounds, dishonest Bob will report the maximal amount of
$(p^h_{\lostB} - p^d_{\lostB})n$ \emph{single-photon} rounds as
missing. That means that of the total $m=(1-p^h_{\lostB}) n$ rounds
that Alice accepts, at least 
\begin{align}
  m^1 \assign (p^1_{\sent} - (p^h_{\lostB} - p^d_{\lostB}))n
\end{align} are single-photon rounds. 

It can be argued as in~\cite{WCSL10} that these $m^1$
single-photon rounds are the (only) ones contributing to the
uncertainty in terms of min-entropy about the string $X$. Formally, we have
\begin{align}
   \hmine{\eps/2}{X_0 X_1 | \Theta K} \geq \frac{m^1}{2} - \frac{m^1
  \delta}{2} \, ,
\end{align}
where $X_0, X_1$ are the sub-strings of $X$ formed according to the
index sets $\mathcal{I}_0$ and $\mathcal{I}_1$, $0<\delta < \frac14$
is fixed and the error parameter $\eps$ is
\begin{align} \label{eq:epsilon_inproof}
  \eps = 2 \exp\left( -
    \frac{(\delta/4)^2}{32(2+\log\frac{4}{\delta})^2} \cdot m^1 \right)
  \, .
\end{align}

Proceeding as in the proof of Protocol~\ref{prot:nonoise} (with $m^1$
instead of $n$), classical min-entropy splitting
(Lemma~\ref{lemma:ESLold}) then ensures that there exists a binary
random variable $D \in \set{0,1}$ such that
\[ \hmine{\eps/2}{X_{\ol{D}} |D \Theta K } \geq \frac{m^1}{4} - \frac{m^1
  \delta}{4} -1 \, .
\]

Then, we use Lemma~\ref{lem:basicminentropyincrease} to get
\begin{align*}
  \hmine{\eps}{X_{\ol{D}} | D \Theta K Q_{out}} \geq -\log
  P^{\cF}_{succ}\left( \frac{m^1}{4} - \frac{m^1 \delta}{4} -1 -\log
    \frac{2}{\eps} \right)
  \geq  -\log
  P^{\cF}_{succ}\left( \left(\frac{1}{4} - \delta \right)m^1 \right) \, ,
\end{align*}
where the last step follows in the same way as in~\cite{KWW09arxiv} from
the monotinicity of the success probability $P^{\cF}_{succ}(k) \leq
P^{\cF}_{succ}(k')$ for $k \geq k'$ and the fact that $\log \frac{2}{\eps}
\leq \frac{\delta}{2} m^1 \leq \frac{3\delta}{4} m^1-1$. 

Additionally, the dishonest receiver learns the two syndromes
$Syn(X_0), Syn(X_1)$. As $X_0$ and $X_1$ are not necessarily independent
from dishonest Bob's point of view, the two syndromes reduce Bob's
min-entropy about $X_{\ol{D}}$ by at most $1.2 \cdot
h(p^h_{\rm err}) \cdot m$ bits of information.

It follows from the chain rule for smooth min-entropy~\eqref{eq:chain} that
\begin{align*}
\hmine{\eps}{X_{\ol{D}}|D \Theta S_D Syn(X_0) Syn(X_1) K Q_{out}}
&\geq \hmine{\eps}{X_{\ol{D}} |D \Theta K Q_{out}}
- \ell - 1.2 \cdot h(p^h_{\rm err}) \cdot m \\
&\geq -\log
  P^{\cF}_{succ}\left( \left(\frac{1}{4} - \delta \right)m^1 \right) -
  \ell - 1.2 \cdot h(p^h_{\rm err}) \cdot m.
\end{align*}
The privacy amplification Theorem~\ref{thm:PA} yields
\begin{equation}\label{eq:explicitTradeoff}
  d(F_{\ol{D}}(X_{\ol{D}}) \mid D \Theta F_D S_D K Q_{out}) \leq
  2^{-\frac12( -\log P^{\cF}_{succ}\left( \left(\frac{1}{4} - \delta
      \right)m^1 \right) - 2\ell - 1.2 \cdot h(p^h_{\rm err}) \cdot m) }+ \eps
  \,
\end{equation}
which is smaller than $2 \eps$ as long as
$$
-\frac12 \log P^{\cF}_{succ}\left( \left(\frac{1}{4} - \delta \right)m^1 \right)-
\ell - 1.2 \cdot h(p^h_{\rm err}) \cdot \frac{m}{2} \geq \log\left(\frac{1}{\eps}\right) \, .
$$
from which our claim follows.
\end{proof}

In the same way as Corollary~\ref{cor:tensorproductOT}, we can derive 
\begin{corollary} \label{cor:robustOT}
  Let Bob's storage be given by $\cF=\cN^{\otimes \nu n}$
  for a storage rate~$\nu>0$, $\cN$ satisfying the strong converse
  property~\eqref{eq:strongconverseproperty} and having
  capacity~$C_\cN$ bounded by
\begin{align}
  C_\cN\cdot\nu< \left(\frac{1}{4} - \delta\right) (p^1_{\rm
      sent} - p^h_\lostB + p^d_\lostB) \, .
\end{align} 
Then Protocol~\ref{prot:practical} is $2\eps$-secure against a
dishonest receiver Bob according to Definition~\ref{def:ROT} with the
following parameters: Let $\delta\in ]0,\frac{1}{4}-C_\cN\cdot \nu[$
and $m^1 \geq 4/\delta$. Then the length $\ell$ of the OT-strings is
bounded by
\begin{align}
\ell &\leq
\frac12
\nu \cdot \gamma^\cN\left(\frac{R}{\nu}\right) \cdot n
 - 1.2 \cdot h(p^h_{\errB}) \cdot (1-p^h_\errB) \frac{n}{2}  
- \log \left( \frac{1}{\eps} \right) \ ,
\end{align}
where $\gamma^\cN$ is the strong converse
parameter of $\cN$ (see~\eqref{eq:strongconverseproperty}) and 

\bigskip
\begin{tabular}{lll}
  $m = (1-p^h_\lostB) n$ & (the number of remaining rounds) 
  \ ,\\[2mm]
  $m^1 = (p^1_\sent - p^h_\lostB + p^d_\lostB) n$ &
  (the minimal number of single-photon rounds)   
  \ ,\\[2mm]
  $R  = \left(\frac{1}{4} - \delta\right) \frac{m^1}{n}$&
  (the rate at which dishonest Bob has to send information\\
  & through storage)
  \ ,
\end{tabular}
\noindent
for sufficiently large $n$.
The error has the form
\begin{align}
\eps(\delta) \leq 2 \exp\left( 
- \frac{\delta^2}{512(4 + \log\frac{1}{\delta})^2} \cdot
(p^1_\sent - p^h_\lostB + p^d_\lostB ) n  
\right)\ .
\end{align}
\end{corollary}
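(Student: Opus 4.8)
The plan is to follow verbatim the route by which Corollary~\ref{cor:tensorproductOT} was deduced from Theorem~\ref{thm:secure1}, but now starting from Theorem~\ref{thm:robustOT}. That theorem already does all the cryptographic work: it establishes $2\eps$-security against a dishonest Bob for an \emph{arbitrary} storage map $\cF$, expressed through $P^\cF_{succ}$. So once we specialize $\cF=\cN^{\otimes\nu n}$, the remaining task is purely a change of variables: convert the bound on $\ell$ involving $P^{\cN^{\otimes\nu n}}_{succ}\!\big((\tfrac14-\delta)m^1\big)$ into one involving the strong-converse exponent $\gamma^\cN$, and rewrite both the error $\eps$ and the admissibility condition in terms of the channel parameters $C_\cN,\nu$ and the protocol parameters $p^1_\sent,p^h_\lostB,p^d_\lostB,p^h_\errB$.

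First I would set $R:=\left(\tfrac14-\delta\right)\frac{m^1}{n}=\left(\tfrac14-\delta\right)(p^1_\sent-p^h_\lostB+p^d_\lostB)$, so that $\left(\tfrac14-\delta\right)m^1=nR=(\nu n)\cdot(R/\nu)$. Reparametrizing the strong-converse property~\eqref{eq:strongconverseproperty} of $\cN$ to block length $\nu n$ and rate $R/\nu$ (i.e.\ the substitutions $n\mapsto\nu n$ and rate $\mapsto R/\nu$, exactly as in the proof of Corollary~\ref{cor:tensorproductOT}) gives
\[
-\frac1n\log P^{\cN^{\otimes\nu n}}_{succ}\!\big((\tfrac14-\delta)m^1\big)\;\geq\;\nu\,\gamma^\cN(R/\nu)\,.
\]
Here the hypothesis $C_\cN\cdot\nu<\left(\tfrac14-\delta\right)(p^1_\sent-p^h_\lostB+p^d_\lostB)$ is precisely what forces $R/\nu>C_\cN$, hence $\gamma^\cN(R/\nu)>0$ and the bound is non-vacuous; the constraint $\delta\in\,]0,\tfrac14-C_\cN\nu[$ is what makes such a $\delta$ available.

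Next I would insert this estimate into the $\ell$-bound of Theorem~\ref{thm:robustOT}, together with $m=(1-p^h_\lostB)n$ for the number of remaining rounds, to obtain $\ell\leq\frac12\nu\,\gamma^\cN(R/\nu)\,n-1.2\cdot h(p^h_\errB)\cdot(1-p^h_\lostB)\tfrac n2-\log\tfrac1\eps$, which is the claimed bound. Finally I would simplify the error: substituting $m^1=(p^1_\sent-p^h_\lostB+p^d_\lostB)n$ into $\eps=2\exp\!\big(-\tfrac{(\delta/4)^2}{32(2+\log(4/\delta))^2}\,m^1\big)$ and using $(\delta/4)^2=\delta^2/16$, $32\cdot16=512$, and $2+\log(4/\delta)=4+\log(1/\delta)$ yields the stated form of $\eps(\delta)$. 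The side conditions $m^1\geq4/\delta$ and ``$n$ sufficiently large'' absorb the hypothesis of Theorem~\ref{thm:robustOT} and the integrality issues below.

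Since all the genuine content—the uncertainty relation, min-entropy splitting, the min-entropy increase Lemma~\ref{lem:basicminentropyincrease}, privacy amplification, and the bookkeeping of single-photon versus erased rounds—is already packaged in Theorem~\ref{thm:robustOT}, this corollary is essentially a substitution. The only place that calls for a little care, and the likeliest spot to slip, is the reparametrization of~\eqref{eq:strongconverseproperty}: one must check that $\cN^{\otimes\nu n}$ is legitimately treated as the $(\nu n)$-fold tensor power whose strong-converse exponent at rate $R/\nu$ is $\nu n\,\gamma^\cN(R/\nu)$, and that the non-integrality of $\nu n$ and of $\left(\tfrac14-\delta\right)m^1$ is handled by passing to $\lfloor\cdot\rfloor$ and invoking monotonicity of $P^\cF_{succ}$, exactly as is already done inside Theorem~\ref{thm:robustOT}. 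Everything else is routine arithmetic.
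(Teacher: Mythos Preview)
Your proposal is correct and is exactly the approach the paper takes: the paper simply states that the corollary is derived ``in the same way as Corollary~\ref{cor:tensorproductOT}'', i.e., by specializing Theorem~\ref{thm:robustOT} to $\cF=\cN^{\otimes\nu n}$, invoking the strong-converse bound $-\tfrac1n\log P^{\cN^{\otimes\nu n}}_{succ}(nR)\geq\nu\,\gamma^\cN(R/\nu)$ with $R=(\tfrac14-\delta)\,m^1/n$, and rewriting the error term. (As you in effect note, your substitution $m=(1-p^h_\lostB)n$ gives the factor $(1-p^h_\lostB)$ in the syndrome term, whereas the stated corollary has $(1-p^h_\errB)$; this appears to be a typo in the statement, not a flaw in your argument.)
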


\section{Password-Based Identification}\label{sec:id}
In this section, we show how the techniques for proving security in
the noisy-quantum-storage model also apply to the protocol
from~\cite{DFSS07,DFSS10journal} achieving secure password-based
identification in the bounded-quantum-storage model. This answers an
open question posed in~\cite{KWW09arxiv}.

\subsection{Task and Protocol}
A user Alice wants to identify herself to a server Bob by means of a
personal identification number (PIN). This task can be achieved by
securely evaluating the equality function on the player's inputs: Both
Alice and Bob input passwords $w_A$ and $w_B$ from a set of possible
passwords $\mathcal{W}$ into the protocol and
Bob learns as output whether $w_A = w_B$ or not.

The protocol proposed in~\cite{DFSS07} is secure against an
unbounded user Alice and a quantum-memory bounded server Bob in the
sense that it is guaranteed that if a dishonest player starts with
quantum side information which is uncorrelated with the honest
player's password $w$, this dishonest player is restricted to
guess a possible $w'$ and find out whether $w=w'$ or not while not
learning anything more than this mere bit of information about the
honest user's password $w$. Formally, security is defined as follows.
\begin{definition} \label{def:usersecurity} We call an
  identification protocol between user Alice and server Bob \emph{secure for
  the user Alice with error $\eps$} against (dishonest) server Bob
  $\dB$ if the following is satisfied: whenever the initial state of
  $\dB$ is independent of $W$, the joint state $\rho_{W \regB}$ after
  the execution of the protocol is such that there exists a random
  variable $W'$ that is independent of $W$ and such that
$$
\rho_{W W' \regB|W'\neq W} \approx_{\eps} \rho_{W\leftrightarrow W'\leftrightarrow  \regB|W'\neq W}.
$$
\end{definition}
The Markov-chain notation is explained in~\eqref{eq:markovdefinition}.

We consider the same protocol for password-based secure identification
from~\cite{DFSS07}, in the more practical form presented
in~\cite{DFLSS09}, where the receiving party measures in a random
basis. Let $\code:\mathcal{W} \rightarrow\set{+,\times}^n$ be the encoding function of
a binary code of length $n$ with $m = |\mathcal{W}|$ codewords and minimal
distance $d$. $\code$ can be chosen such that $n$ is linear in
$\log(m)$ or larger, and $d$ is linear in $n$.  Furthermore, let
${\cal F}$ and ${\cal G}$ be strongly two-universal classes of hash
functions 
from $\set{0,1}^n$ to $\set{0,1}^{\ell}$
and from $\mathcal{W}$ to $\set{0,1}^{\ell}$, respectively, for some
parameter~$\ell$. 

\begin{protocol}[\cite{DFSS07,DFLSS09}]Password-based identification \QID(w): \label{prot:id}
\begin{enumerate}
\item Alice picks $x \in_R \01^n$ and $\theta \in_R \{+,\times\}^n$.
  At time $t=0$, she sends
  $\ket{x_1}_{\theta_1},\ldots,\ket{x_n}_{\theta_n}$ to Bob.
\item Bob picks $\hat{\theta} \in_R \{+,\times\}^n$ at random and
measures the $i$th qubit in basis $\hat{\theta}_i$. 
He obtains outcome $\hat{x} \in \01^n$.
\item[]\hspace{-1cm}Both parties wait time $\Delta t$.
\item \label{step:afterwaiting}
Bob computes a string $\kappa\in \{ +, \times \}^n$ such that
$\hat{\theta}= \code(w) \oplus \kappa$ (interpreting $+$ as 0
and $\times$ as 1 so that $\oplus$ makes sense). He sends $\kappa$ to
Alice and they define the shifted code $\code'(w) \assign \code(w) \oplus \kappa$. 
\item 
Alice sends $\theta$ and $f \in_R \mathcal{F}$ to Bob. 
Both compute $\mathcal{I}_w \assign \{ i: \theta_i=\mathfrak{c'}(w)_i \}$.
\item 
Bob sends $g \in_R \mathcal{G}$ to Alice.
\item 
Alice sends $z\assign f(x|_{\mathcal{I}_w}) \oplus g(w)$ to Bob. 
\item Bob accepts if and only if $z=f(\hat{x}|_{\mathcal{I}_w}) \oplus g(w)$.
\end{enumerate}
\end{protocol}

We note that this protocol can also be (non-trivially) extended to additionally
withstand man-in-the-middle attacks~\cite{DFSS07,DFSS10journal}.

\subsection{Security Analysis}

\begin{theorem}[Security against dishonest Bob]\label{prop:Usec} 
  Fix $0<\delta<\frac14$ and let $\sigma(\delta)$ be defined as
  in~(\ref{eq:sigma}).  Then, for any attack of a dishonest Bob with
  storage channel $\cF:\cB(\cH_{in})\rightarrow\cB(\cH_{out})$,
  Protocol~\ref{prot:id} is an $\eps$-secure identification protocol
  against a dishonest receiver Bob according to
  Definition~\ref{def:usersecurity}, if $d \geq
  \frac{4+4\log(m)}{\delta}$ and
$$\eps = 2^{-\frac12( -\log P_{succ}^{\cF}\left( (\frac14 - \delta) d
  \right) - \ell)} + 2^{-(\sigma(\delta/4) d - \log(m) - 3)} \, .$$ 
\end{theorem}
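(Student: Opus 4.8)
The plan is to follow the same three-phase template used for oblivious transfer in Theorem~\ref{thm:secure1}, but with the code structure replacing the $\setI_0,\setI_1$ splitting. First I would set up the relevant random variables: let $X\in\{0,1\}^n$ be Alice's string, $\Theta$ her bases, and for each candidate password $w'\in\mathcal{W}$ let $X_{w'} \assign X|_{\mathcal{I}_{w'}}$ be the substring Bob would need to know in order to pass the check for $w'$. Let $K$ denote all classical information a dishonest Bob extracts in his encoding attack (including $\kappa$). The goal is to exhibit $W'$ independent of $W$ such that, conditioned on $W'\neq W$, the string $X_W$ (and hence $z=f(X|_{\mathcal{I}_W})\oplus g(W)$) is essentially uniform and independent of Bob's view, which by privacy amplification gives the Markov-chain conclusion of Definition~\ref{def:usersecurity}.

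The heart of the argument is a pairwise uncertainty relation followed by the Entropy-Splitting Lemma (Lemma~\ref{lemma:basicES}). For any two distinct passwords $w_i\neq w_j$, the codewords $\code'(w_i)$ and $\code'(w_j)$ differ in at least $d$ positions; on those positions Alice's bases $\theta_k$ agree with exactly one of the two shifted codewords, so the union $\mathcal{I}_{w_i}\cup\mathcal{I}_{w_j}$ contains at least $d$ of the measurement positions, and the bits of $X$ on a $\ge d/2$-sized subset are measured in a basis that is ``wrong'' relative to whichever password. More carefully, I would invoke Theorem~\ref{thm:uncertainty} applied to the $d$ relevant qubits to get $\hmine{\eps'}{X_{w_i} X_{w_j} \mid \Theta K} \geq (\tfrac12 - 2(\delta/4)) d = (\tfrac12 - \tfrac{\delta}{2})d$ with $\eps' \leq 2^{-\sigma(\delta/4) d}$, uniformly over all pairs $i\neq j$ — this is the step where the minimal distance $d$ enters, and it is essentially the identification-protocol analogue of the weak-string-erasure bound. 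Then Lemma~\ref{lemma:basicES} produces $V$ over $\mathcal{W}$ such that for any independent $W$ with $\hmin(W)\geq 1$ (an honest user's password is uniform, so this holds),
\begin{align*}
\hmine{2m\eps'}{X_W \mid V W \Theta K, V\neq W} \geq \frac{(\tfrac12-\tfrac\delta2)d}{2} - \log m - 1 \geq \Big(\frac14 - \delta\Big)d - \log m - 1 \, ,
\end{align*}
using $d \geq (4+4\log m)/\delta$ to absorb the $\log m$ and $-1$ terms into the $\delta d$ slack. Setting $W'\assign V$ gives the required $W'$.

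Next I would push this through the noisy storage channel and privacy amplification exactly as in Theorem~\ref{thm:secure1}. Applying Lemma~\ref{lem:basicminentropyincrease} with the storage map $\cF$ (taking $\eps'' = \tfrac1{\eps'}$-type slack absorbed into the monotonicity argument) yields $\hmine{\cdot}{X_W \mid V W \Theta K Q_{out}, V\neq W} \geq -\log P^{\cF}_{succ}\big((\tfrac14-\delta)d\big)$; then the chain rule~\eqref{eq:chain} accounts for Bob releasing $g$ and any remaining short classical messages, and Theorem~\ref{thm:PA} (privacy amplification with the hash $f$, treating $g(W)$ as a one-time pad already folded in) gives $d(z \mid \text{Bob's view}, V\neq W) \leq 2^{-\frac12(-\log P^{\cF}_{succ}((\tfrac14-\delta)d) - \ell)} + (\text{smoothing error})$. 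Finally I would collect the smoothing errors: the $2^{-\sigma(\delta/4)d}$ from the uncertainty relation, inflated by the factor $2m$ from Lemma~\ref{lemma:basicES} and by the conditioning on $V\neq W$ (probability $\geq \tfrac12$), contributes the $2^{-(\sigma(\delta/4)d - \log m - 3)}$ term, and the privacy-amplification term contributes the first summand, giving the claimed $\eps$. The step I expect to be the main obstacle is the first one: correctly formalizing the pairwise min-entropy bound $\hmine{\eps'}{X_{w_i}X_{w_j}\mid\Theta K}\geq(\tfrac12-\tfrac\delta2)d$ — in particular, carefully arguing that only the $\geq d$ positions where the two shifted codewords differ matter, that Theorem~\ref{thm:uncertainty} can be applied restricted to those positions with the rest of $\Theta$ and the encoding-attack information $K$ treated as side information, and that the bound is genuinely uniform over all $\binom{m}{2}$ pairs so that Lemma~\ref{lemma:basicES} applies.
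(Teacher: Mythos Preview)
Your proposal is correct and follows essentially the same route as the paper: pairwise uncertainty relation on the $\geq d$ positions where $\code'(w_i)$ and $\code'(w_j)$ differ (Theorem~\ref{thm:uncertainty}), then Lemma~\ref{lemma:basicES} to produce $W'$, then Lemma~\ref{lem:basicminentropyincrease} for the noisy storage, then privacy amplification, with the same error bookkeeping yielding the two summands in $\eps$. For the step you flag as the main obstacle, the paper's argument is to fix $X$ on the positions where the two shifted codewords coincide and observe that on the remaining $\geq d$ positions every bit of the restricted $X$ lands in exactly one of $X_{w_i},X_{w_j}$ (so $\hmin^{\eps'}(X_{w_i}X_{w_j}\mid\Theta K)\geq(\tfrac12-\tfrac\delta2)d$ follows directly); the inclusion of the classical encoding-attack outcome $K$ in the conditioning is handled exactly as in the weak-string-erasure analysis of~\cite{KWW09arxiv}, by conditioning on $K$ and applying the uncertainty relation to the residual state.
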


%
To understand what the result on $\eps$ means,
note that using a family of asymptotically good codes, we can assume
that $d$ grows linearly with the main security parameter $n$, while
still allowing $m$ (the number of passwords) to be exponential in $n$.
So we may choose the parameters such that $\frac{d}{n},
\frac{\log(m)}{n}$, and $\frac{\ell}{n}$ are all constants. The result
above now says that $\eps$ is exponentially small as a function of $n$
if these constants and the noisy channel $\cF$ fulfill that for some
$0< \delta < \frac{1}{4}$, $\frac{ -\log P_{succ}^{\cF}\left(
    (\frac14-\delta)d \right)}{n} - \frac{\ell}{n} > 0$ and
$\sigma(\delta/4)\frac{d}{n} - \frac{\log(m)}{n}> 0$.  See Theorem
\ref{thm:impersonation} for a choice of parameters that also takes
server security into account. 


\begin{proof}

We use upper case letters $W$, $X$, $\Theta$, $K$, $F$, $G$ and $Z$ for the
random variables that describe the respective values $w$, $x$,
$\theta$ etc.\ in an execution of \QID. 

Recall that in the noisy-storage model, we denote by $K$ the classical
outcome of Bob's encoding attack and $Q_{in}$ denotes Bob's quantum state
right before the waiting time.

We write $X_j = X|_{\mathcal{I}_j}$ for any $j$. Note that dishonest
Bob starts without any knowledge about honest Alice's password $W$ and
hence, $W$ is independent of $X$, $\Theta$, $K$, $F$, $G$ and $Q_{in}$.

For $1 \leq i \neq j \leq m$, fix the value of $X$, and
correspondingly of $X_i$ and $X_j$, at the positions where $\code(i)$
and $\code(j)$ coincide, and focus on the remaining (at least) $d$
positions. The uncertainty relation (Theorem~\ref{thm:uncertainty})
implies that the restriction of $X$ to these positions has $(\frac12 -
\delta/2)d$ bits of $\eps'$-smooth min-entropy given~$\Theta$, where
$\eps' \leq 2^{-\sigma(\delta/4)d}$.
Since every bit in the restricted $X$ appears in one of
$X_i$ and $X_j$, the pair $X_i,X_j$ also has $(\frac12 - \delta/2)d$
bits of $\eps'$-smooth min-entropy given~$\Theta$ and $K$.  The Entropy
Splitting Lemma~\ref{lemma:basicES} implies that there exists $W'$
(called $V$ in Lemma~\ref{lemma:basicES}) such that if $W \neq W'$
then $X_W$ has $(\frac14 - \delta/4)d - \log(m) - 1$ bits of
$2m\eps'$-smooth min-entropy given $W$ and $W'$ (and~$\Theta,K$), i.e.,
$$ \hmin^{2m\eps'} (X_W|WW' \Theta K, W \neq W') \geq (\frac14 -
\delta/4)d - \log(m) - 1 \, .$$

By Lemma~\ref{lem:basicminentropyincrease}, it follows that for
$Q_{out}=\mathcal{F}(Q_{in})$, we get
\begin{align*} 
\hmin^{(2m+1)\eps'} (X_W|WW' \Theta K Q_{out}, W \neq W' ) &\geq
-\log P_{succ}^{\cF} \left( \left(\frac14 - \delta/4 \right) d - \log(m)
  - 1 - \log(1/\eps') \right)\\
&\geq -\log P_{succ}^{\cF} \left( \left(\frac14 - \delta \right) d
\right) \, ,
\end{align*}
where the last inequality follows as in the OT-case (proof of
Theorem~\ref{thm:secure1}) from $\log(1/\eps') \leq \frac{\delta}{2}d
\leq \frac{3\delta}{4}d - \log(m) -1$ and the assumption on $d$.

Privacy amplification then guarantees that $F(X_W)$ is $\eps''$-close
to random and independent of $F, W, W',\Theta,K$ and $Q_{out}$,
conditioned on $W \neq W'$, where \smash{$\eps'' = \frac12\cdot
  2^{-\frac12\left( -\log P_{succ}^{\cF}\left((\frac14-\delta)d\right)
      - \ell \right) } + (2m+1)\eps'$}.  It follows that $Z = F(X_W)
\oplus G(W)$ is $\eps''$-close to random and independent of $F, G, W,
W',\Theta,K$ and $Q_{out}$, conditioned on $W \neq W'$. The rest of
the argument is the same as in the original
proof~\cite{DFSS10journal}.

Formally, we want to upper bound the trace distance between $\rho_{W
  W' \regB| W'\neq W}$ and $\rho_{W\leftrightarrow W'\leftrightarrow
  \regB| W'\neq W}$.  Since the output state $\regB$ is, without loss
of generality, obtained by applying some unitary transform to the set
of registers 
$(Z,F,G,W',\Theta,K, Q_{out})$, the distance above is equal to the
distance between $\rho_{W W' (Z,F,G,\Theta,K, Q_{out})| W'\neq W}$ and
$\rho_{W\leftrightarrow W'\leftrightarrow (Z,F,G,\Theta,K, Q_{out})|
  W'\neq W}$.  We then get:
\begin{align*}
  \rho&_{W W' (Z, F, G, \Theta, Q_{out})|W'\neq W} \approx_{\eps''}
  {\textstyle\frac{1}{2^\ell}}\mathbbm{1}_Z \otimes \rho_{W W' (F, G,  \Theta,K, Q_{out})|W'\neq W} \\
  &=\; {\textstyle\frac{1}{2^\ell}}\mathbbm{1}_Z \otimes
  \rho_{W\leftrightarrow W'\leftrightarrow (F,G,\Theta,K, Q_{out})|
    W'\neq W} \approx_{\eps''} \rho_{W\leftrightarrow
    W'\leftrightarrow (Z,F,G,\Theta,K, Q_{out})| W'\neq W} \enspace ,
\end{align*}
where approximations follow from privacy amplification and the exact equality
comes from the independency of $W$, which, when conditioned 
on  $W' \neq W$, translates to independency given $W'$. 
The claim follows with $\eps = 2\eps''$ and the (crude) estimation
$2(2m+1) \leq 8m$. 
\end{proof}

\begin{theorem}[Security against dishonest Alice~\cite{DFSS07}]\label{prop:Ssec}
  If $\hmin(W) \geq 1$, then \QID is secure against dishonest user Alice
  with security error $\eps= m^2/2^\ell$.
\end{theorem}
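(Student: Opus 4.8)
The plan is to read Theorem~\ref{prop:Ssec} as the soundness (server-security) statement of the framework of~\cite{DFSS07}: a dishonest user Alice, who starts with no information about Bob's password $W$, can make honest Bob accept only if she has effectively committed in advance to the correct password. Concretely, I would exhibit a random variable $W'$, taking values in $\mathcal W$ or in a dummy symbol $\bot$, that is \emph{independent of} $W$ and plays the role of Alice's implicit guess, and show $\Pr[\text{Bob accepts} \wedge W \neq W'] \le \eps = m^2/2^\ell$. I would then observe that this simultaneously makes the natural simulation faithful, so that the protocol is $\eps$-close to the ideal equality functionality: Bob's accept bit agrees with $[W=W']$ up to the same error, and the only thing Alice sees, $(\kappa,g)$, is identically distributed in the real and ideal worlds because in Protocol~\ref{prot:id} both $\kappa$ and $g$ are uniform and independent of $W$.

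I would extract $W'$ by running Alice while playing honest Bob, with two harmless modifications: (i) use fresh independent uniform strings in place of $\kappa$ and $g$ (legitimate, since those are exactly their real distributions and neither carries information about $W$); and (ii) postpone Bob's measurement of the incoming systems until after Alice has announced $\theta$, then measure the $i$-th system in basis $\theta_i$ to get a string $y$. The point of (ii) is that for the actual password $W$ the set $\mathcal I_W=\{i:\theta_i=\code(W)_i\oplus\kappa_i\}$ is precisely the set of positions where honest Bob's basis $\hat\theta=\code(W)\oplus\kappa$ agrees with $\theta$, so $y|_{\mathcal I_W}$ equals honest Bob's measurement outcome $\hat x|_{\mathcal I_W}$; more generally $y|_{\mathcal I_{w'}}$ is the outcome Bob would obtain on $\mathcal I_{w'}$ if his password were $w'$. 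Now define $W'$ to be the unique $w'\in\mathcal W$ with $z = f(y|_{\mathcal I_{w'}})\oplus g(w')$, and $W':=\bot$ if there is no such $w'$ or it is not unique. Since $W'$ is a function only of Alice's messages $\theta,f,z$ and the simulator's coins, it is independent of $W$.

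The heart of the proof is a collision bound. Honest Bob accepts iff $z=f(\hat x|_{\mathcal I_W})\oplus g(W)=f(y|_{\mathcal I_W})\oplus g(W)$, i.e.\ iff $W$ is one of the ``consistent'' passwords; so if Bob accepts but $W\neq W'$ there must be two distinct consistent passwords $w_1\neq w_2$, which forces $g(w_1)\oplus g(w_2)=f(y|_{\mathcal I_{w_1}})\oplus f(y|_{\mathcal I_{w_2}})$. Crucially, $\theta$, $f$, $y$ and all the sets $\mathcal I_{w'}$ are determined before $g$ is chosen (in Protocol~\ref{prot:id}, $g$ is Bob's last message, sent after $\theta,f$ and well after the qubits), so the right-hand side is independent of $g$; by strong two-universality of $\mathcal G$, for each fixed pair $w_1\neq w_2$ the left-hand side is uniform on $\01^\ell$, hence this equality holds with probability $2^{-\ell}$ over the choice of $g$. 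A union bound over the $\binom{m}{2}<m^2/2$ pairs gives $\Pr[\text{Bob accepts}\wedge W\neq W'] < m^2/2^{\ell+1}\le m^2/2^\ell$, as claimed.

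To finish, note that $[W=W']$ always implies $[\text{Bob accepts}]$ (if $W=W'\neq\bot$ then $W$ is consistent), so the real accept bit and the ideal bit $[W=W']$ differ only on the collision event just bounded; together with the identical distribution of Alice's view this yields the required $\eps$-closeness to the ideal equality functionality. The hypothesis $\hmin(W)\ge 1$ is not used in the $\eps$-bound itself; it is the standard non-triviality condition ensuring that even in the ideal world impersonation succeeds with probability at most $2^{-\hmin(W)}\le\tfrac12$ (using independence of $W'$ and $W$), so that being $\eps$-close to the ideal functionality is a meaningful guarantee. I expect the only delicate point to be arranging the extraction of $W'$ so that it is at once independent of $W$ and matched to Bob's acceptance rule up to the collision term; the postponed-measurement trick in step (ii) is exactly what reconciles the two.
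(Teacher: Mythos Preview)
The paper does not prove this statement; it is quoted verbatim from~\cite{DFSS07,DFSS10journal}. Your sketch is exactly the argument used there: sample $\kappa$ uniformly (which has the same distribution as $\hat\theta\oplus\code(W)$ and makes it manifestly independent of $W$), postpone Bob's measurement until $\theta$ is known and measure every position in basis $\theta$ to obtain a single string $y$ whose restriction $y|_{\mathcal I_{w'}}$ plays the role of $\hat x|_{\mathcal I_{w'}}$ for every candidate $w'$, define $W'$ as the unique $w'$ consistent with $z$, and bound the probability that two distinct passwords are simultaneously consistent via strong two-universality of $\mathcal G$ and a union bound over $\binom{m}{2}$ pairs. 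The perfect independence of Alice's view from $W$ is likewise the observation in the original, since Bob's only messages $\kappa$ and $g$ are uniform.

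One point to correct concerns the role of $\hmin(W)\geq 1$. In the server-security definition of~\cite{DFSS07,DFSS10journal} (the mirror image of Definition~\ref{def:usersecurity} here), the relevant error is taken \emph{conditioned on} $W\neq W'$. Your collision argument yields the joint bound $\Pr[\text{Bob accepts}\wedge W\neq W']\leq\binom{m}{2}2^{-\ell}<m^2/2^{\ell+1}$; to pass to the conditional probability one must divide by $\Pr[W\neq W']$, and independence of $W'$ from $W$ together with $\hmin(W)\geq 1$ gives $\Pr[W\neq W']\geq\tfrac12$, producing exactly $\eps=m^2/2^\ell$. So the hypothesis does enter the bound, by the same mechanism through which it enters Lemma~\ref{lemma:basicES}, and is not merely a non-triviality condition.
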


We call an identification scheme {\em $\eps$-secure against
  impersonation attacks} if the protocol is secure for both players
with error at most $\eps$ in both cases. The following holds:

\begin{theorem}
\label{thm:impersonation}
If $\hmin(W) \geq 1$, then the identification scheme \QID (with
suitable choice of parameters) is $\eps$-secure against impersonation
attacks for any unbounded user Alice and for any server Bob with noisy storage
of the form $\cF = \cN^{\otimes \nu n}$ with $\nu >0$, where $\cN$
satisfies the strong-converse property~\eqref{eq:strongconverseproperty}, and
$$ C_{\cN} \cdot \nu < \frac14 \, ,$$
and the security error is 
$$
\eps = 2^{-\frac13( \gamma^{\cN}\left( \frac{1/4 - \delta}{\nu}
  \right) \nu \mu n - 6\log(m) - 1)} + 
2^{-(\sigma(\delta/4) \mu n - \log(m) - 4)}
$$ 
for an arbitrary $0 < \delta < \frac14$, and where $\mu =
h^{-1}(1-\log(m)/n)$, and $h^{-1}$ is the inverse function of the
binary entropy function: $h(p) \assign
-p\cdot\log(p)-(1-p)\cdot\log(1-p)$ restricted to $0 < p \leq
\frac12$.  In particular, if $\log(m)$ is sublinear in $n$, then
$\eps$ is negligible in $n$ as long as $\gamma^{\cN}\left(\frac{1/4 -
    \delta}{\nu} \right) > 0$.
\end{theorem}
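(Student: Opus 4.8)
The idea is to assemble the statement from the two one-sided guarantees already established — Theorem~\ref{prop:Usec} (security against a dishonest server $\dB$) and Theorem~\ref{prop:Ssec} (security against a dishonest user $\dA$) — after (i) instantiating \QID\ with a good code and (ii) specialising the storage map to $\cF=\cN^{\otimes\nu n}$ exactly via the strong-converse property as in the proof of Corollary~\ref{cor:tensorproductOT}. Since being ``$\eps$-secure against impersonation'' only requires both one-sided errors to be at most $\eps$, it suffices to upper bound $\max\{\eps_{\dB},\eps_{\dA}\}$ and then fix the free parameters so that the two are comparable.

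First I would fix the code. The theorem allows $n$ to be (at least) linear in $\log m$ and $d$ linear in $n$; to make the bound as strong as possible one takes a code on the Gilbert--Varshamov bound, so that a binary code of length $n$ with $m$ codewords and minimum distance $d\ge\lfloor\mu n\rfloor$ exists, where $\mu=h^{-1}(1-\log m/n)$ is precisely the quantity appearing in the theorem. For $\log m$ sublinear in $n$ this keeps $\mu$ bounded away from both $0$ and $\tfrac12$ and makes the side condition $d\ge(4+4\log m)/\delta$ of Theorem~\ref{prop:Usec} hold once $n$ is large, so that $\sigma(\delta/4)d\approx\sigma(\delta/4)\mu n$ and $(\tfrac14-\delta)d\approx(\tfrac14-\delta)\mu n$ up to the floor.

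Next, handle dishonest $\dB$. Theorem~\ref{prop:Usec} yields $\eps_{\dB}=2^{-\frac12(-\log P_{succ}^{\cF}((\frac14-\delta)d)-\ell)}+2^{-(\sigma(\delta/4)d-\log m-3)}$; substituting $\cF=\cN^{\otimes\nu n}$ and invoking~\eqref{eq:strongconverseproperty} with the same rescaling used in Corollary~\ref{cor:tensorproductOT} converts the first exponent into a positive multiple of $\gamma^{\cN}\!\bigl(\tfrac{1/4-\delta}{\nu}\bigr)\,\nu\mu n$, which is strictly positive provided $C_{\cN}\nu<\tfrac14$ and $\delta$ is small enough that $\tfrac{1/4-\delta}{\nu}>C_{\cN}$; the $\sigma$-term is untouched. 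For dishonest $\dA$, Theorem~\ref{prop:Ssec} gives $\eps_{\dA}=m^2/2^\ell=2^{-(\ell-2\log m)}$ — this is where $\hmin(W)\ge1$ enters, together with the same hypothesis in the entropy-splitting step underlying Theorem~\ref{prop:Usec}. The only remaining free parameter is $\ell$: enlarging it shrinks the privacy-amplification term of $\eps_{\dB}$ but grows $\eps_{\dA}$, so I would choose $\ell\approx\tfrac13\gamma^{\cN}(\cdot)\nu\mu n$ to make those two terms of the same order. Both then become $2^{-\frac13(\gamma^{\cN}(\cdot)\nu\mu n-O(\log m))}$, and a crude accounting of the $\log m$ and constant contributions (including the factors of two carried along by Theorem~\ref{prop:Usec}) delivers the stated $\eps=2^{-\frac13(\gamma^{\cN}(\frac{1/4-\delta}{\nu})\nu\mu n-6\log m-1)}+2^{-(\sigma(\delta/4)\mu n-\log m-4)}$. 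Finally, if $\log m=o(n)$ and $\gamma^{\cN}(\tfrac{1/4-\delta}{\nu})>0$, then $\mu$ is bounded below, both exponents grow linearly in $n$, and $\eps$ is negligible.

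I expect the main obstacle to be bookkeeping rather than any new idea: one must exhibit a single $\delta\in(0,\tfrac14)$ with $\tfrac{1/4-\delta}{\nu}>C_{\cN}$ and a single $\ell$ that \emph{simultaneously} push $\eps_{\dB}$ and $\eps_{\dA}$ below the claimed bound, and verify that rounding $d$ to $\lfloor\mu n\rfloor$ and the constraint $d\ge(4+4\log m)/\delta$ do not spoil the leading-order behaviour. One should also note that the Gilbert--Varshamov argument only guarantees \emph{existence} of the code, which is all that ``with suitable choice of parameters'' requires.
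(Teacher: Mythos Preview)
Your proposal is correct and matches the paper's proof essentially step for step: use Gilbert--Varshamov codes to get $d\approx\mu n$, invoke Theorems~\ref{prop:Usec} and~\ref{prop:Ssec}, specialise $-\log P_{succ}^{\cF}$ via the strong-converse property, and pick $\ell=\tfrac13\gamma^{\cN}(\cdot)\nu d$ to balance the two errors. One small slip: enlarging $\ell$ \emph{grows} the privacy-amplification term of $\eps_{\dB}$ (the exponent in $2^{-\frac12(-\log P_{succ}-\ell)}$ becomes less negative) and \emph{shrinks} $\eps_{\dA}=m^2/2^\ell$, not the other way around---but your balancing conclusion is unaffected.
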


\begin{proof}
First of all, we have that $-\log P_{succ}^{\cN^{\otimes \nu n}}\left(
  (1/4-\delta) d \right) \geq \gamma^{\cN}\left(
  \frac{1/4-\delta}{\nu} \right) \nu d$. 

We choose $\ell = \frac13 \cdot \gamma^{\cN} \left(
  \frac{1/4-\delta}{\nu} \right) \nu d$. Then security against
dishonest Bob holds except with an error $\eps = 2^{-\frac13 \cdot
  \gamma^{\cN} \left( \frac{1/4-\delta}{\nu} \right) \nu d} +
2^{-(\sigma(\delta/4) d - \log(m) - 3)}$, and security against
dishonest Alice holds except with an error $m^2/2^\ell = 2^{- \frac13
  \left( \gamma^{\cN} \left( \frac{1/4-\delta}{\nu} \right) \nu d - 6
    \log(m) \right)}$. Using a code $\code$, which asymptotically
meets the Gilbert-Varshamov bound~\cite{Tho83}, $d$ may be chosen
arbitrarily close to $n \cdot h^{-1}\bigl(1-\log(m)/n\bigr)$.  In
particular, we can ensure that $d$ does not differ from this value by
more than 1. Inserting $d= \mu \cdot n -1$ in the expressions and
using that $\gamma^{\mathcal{N}}\left( \frac{1/4-\delta}{\nu} \right)\nu \leq 1$ yields the theorem.
\end{proof}

\section{Conclusion}
We have used the technical tool from~\cite{KWW09arxiv} to prove the
security of the original protocols for oblivious transfer and secure
identification against adversaries performing general
noisy-quantum-storage attacks. The main advantage of our protocols
is the straightforward constant-round classical post-processing which
makes them easier to implement in the lab compared to the protocols
from~\cite{KWW09arxiv,WCSL10}. Their security analysis yields
simpler expressions for the security error. For a given number of
pulses and a low security threshold, our approach generally yields
higher OT-rates. We show for the first time the security of a
password-based identification protocol against general
noisy-quantum-storage attacks.

This work leads to the question whether a similar result as in QKD
holds, namely that general storage attacks are no better than coherent
(or individual) storage attacks for which the best encoding attack is
known~\cite{STW09}.

\section*{Acknowledgments}
This work is supported by EU fifth framework project QAP IST 015848
and the Dutch NWO VICI project 2004-2009. 

\bibliographystyle{alpha}
\bibliography{crypto,qip,procs}

\end{document}